\documentclass[11pt]{article}

\usepackage[margin=1in]{geometry}
\geometry{letterpaper}

\usepackage{amsfonts,amssymb,amsmath,amsthm,mathtools,mathrsfs}
\usepackage{thmtools,thm-restate,wrapfig,latexsym}

\usepackage{graphicx,microtype,lpic,bm,xspace,float}
\usepackage[bf]{caption}

\usepackage{enumitem}
\setlist[itemize]{noitemsep,listparindent=\parindent}	
\setlist[enumerate]{noitemsep}

\usepackage[usenames,dvipsnames]{xcolor}
\usepackage{hyperref}
\hypersetup{
linktoc=all,
	colorlinks=true,
	linkcolor=Sepia,
	citecolor=Sepia,
	filecolor=Sepia,
	urlcolor=Sepia
}

\usepackage{tikz}
\usetikzlibrary{decorations.pathmorphing}

\usepackage[framemethod=tikz]{mdframed}
\mdfsetup{
	roundcorner=4pt,
	nobreak=true,
	linewidth=.5,
	innerleftmargin=20pt,
	innerrightmargin=20pt,
	innertopmargin=10pt,
	innerbottommargin=10pt,
}
\newenvironment{myframe}[2]
  {\mdfsetup{
		frametitle={\hspace{#2\textwidth}#1},
		skipabove=3mm,
		skipbelow=2mm,
		leftmargin=2.2cm,
		rightmargin=2.2cm,
    innertopmargin=2mm,
		innerbottommargin=7pt,
		innerleftmargin=10pt,
		innerrightmargin=10pt,
		linecolor=black,
		linewidth=.5pt
    }
  \begin{mdframed}%
  }
  {\end{mdframed}}

\declaretheorem[name=Theorem]{theorem}
\declaretheorem[name=Lemma,sibling=theorem]{lemma}
\declaretheorem[name=Claim,sibling=theorem]{claim}
\declaretheorem[name=Corollary,sibling=theorem]{corollary}
\declaretheorem[name=Fact,sibling=theorem]{fact}
\declaretheorem[name=Definition,numberlike=theorem,style=definition]{definition}

\declaretheorem[name={1-vs-3~Lemma},numbered=no]{one-vs-three}
\declaretheorem[name={3-vs-7~Lemma},numbered=no]{three-vs-seven}
\declaretheorem[name={Homogeneity~Lemma},numbered=no]{homogeneity}
\newcommand{\onethreename}{1-vs-3 Lemma}
\newcommand{\threesevenname}{3-vs-7 Lemma}
\newcommand{\homogname}{Homogeneity Lemma}
\newcommand{\onethree}{\hyperref[thm:1v3]{\onethreename}\xspace}
\newcommand{\threeseven}{\hyperref[thm:3v7]{\threesevenname}\xspace}
\newcommand{\homog}{\hyperref[thm:homog]{\homogname}\xspace}

\newcommand{\Z}{\mathbb{Z}}
\newcommand{\R}{\mathbb{R}}

\newcommand{\I}{\mathbb{I}}
\renewcommand{\H}{\mathbb{H}}
\newcommand{\D}{\mathbb{D}}
\renewcommand{\Pr}{\mathbb{P}}
\newcommand{\E}{\mathbb{E}}
\newcommand{\semi}{{}\operatorname{;}{}}
\renewcommand{\mid}{\,|\,}
\newcommand{\bigmid}{\,\big|\,}

\newcommand{\midd}{\,\|\,}


\newcommand{\calG}{\mathcal{G}}

\newcommand{\calP}{\mathcal{P}}
\newcommand{\calQ}{\mathcal{Q}}

\newcommand{\calS}{\mathcal{S}}

\newcommand{\calX}{\mathcal{X}}
\newcommand{\calY}{\mathcal{Y}}

\newcommand{\witness}{$(\#\exists\!-\!1)$}

\newcommand{\KWEF}{KW\!/EF\xspace}
\newcommand{\KW}{\textsc{kw}}
\newcommand{\A}{\textrm{A}}
\newcommand{\B}{\textrm{B}}

\newcommand{\smallFunction}[2]{\newcommand{#1}{{\textsc{#2}}}}
\smallFunction{\AND}{And}
\smallFunction{\NAND}{Nand}
\smallFunction{\OR}{Or}
\smallFunction{\XOR}{Xor}
\smallFunction{\IP}{IP}
\smallFunction{\TSE}{Tse}
\smallFunction{\SAT}{Sat}

\DeclareMathOperator{\xc}{xc}
\DeclareMathOperator{\rk}{rk}

\DeclareMathOperator{\var}{var}
\DeclareMathOperator{\conv}{conv}
\DeclareMathOperator{\viol}{viol}
\DeclareMathOperator{\acc}{acc}
\DeclareMathOperator{\KWp}{\textrm{\upshape KW$^{+}$}}

\newcommand{\Pia}{\Pi^{\acc}}

\renewcommand{\nwarrow}{\text{\rotatebox{135}{$\rightarrow$}}\!}

\begin{document}

\newgeometry{margin=1.5in,top=1.8in}

\begin{center}
{\LARGE Extension Complexity of Independent Set Polytopes}
\\[1cm] \large
\begin{tabular}{c@{\hspace{1.3cm}}c@{\hspace{1.3cm}}c}
Mika G\"o\"os$^1$ &
Rahul Jain$^2$ &
Thomas Watson$^1$
\end{tabular}

\vspace{5mm}

{\slshape\small
$^1$Department of Computer Science, University of Toronto \\[0mm]
$^2$Centre for Quantum Technologies and Department of Computer Science,\\[-0.5mm]
National University of Singapore and MajuLab, UMI 3654, Singapore}

\vspace{9mm}

\today

\vspace{9mm}

\normalsize
\bf Abstract
\end{center}

\noindent
We exhibit an $n$-node graph whose independent set polytope requires extended formulations of size exponential in $\Omega(n/\log n)$. Previously, no explicit examples of $n$-dimensional $0/1$-polytopes were known with extension complexity larger than exponential in $\Theta(\sqrt{n})$. Our construction is inspired by a relatively little-known connection between extended formulations and (monotone) circuit depth.

\thispagestyle{empty}
\newpage
\newgeometry{margin=1.2in,top=1.5in}
\thispagestyle{empty}
\setcounter{page}{0}

\tableofcontents

\restoregeometry
\newpage
\section{Introduction} \label{sec:intro}

A polytope $P\subseteq \R^n$ with many facets can sometimes admit a concise description as the projection of a higher dimensional polytope $E\subseteq\R^e$ with few facets. This phenomenon is studied in the theory of ``extended formulations''. The \emph{extension complexity} $\xc(P)$ of a polytope $P$ is defined as the minimum number of facets in any $E$ (called an \emph{extended formulation} for $P$) such that
\[
P~=~\{x\in\R^n: (x,y)\in E\text{ for some }y\}.
\]
Extended formulations are useful for solving combinatorial optimization problems: instead of optimizing a linear function over $P$, we can optimize it over $E$---this may be more efficient since the runtime of~LP solvers often depends on the number of facets.

Fiorini et al.~\cite{fiorini15exponential} were the first to show (using methods from communication complexity~\cite{kushilevitz97communication,jukna12boolean}) exponential extension complexity lower bounds for many explicit polytopes of relevance to combinatorial optimization, thereby solving an old challenge set by Yannakakis~\cite{yannakakis91expressing}. For example, their results include a $2^{\Omega(m)}$ lower bound for the $\binom{m}{2}$-dimensional \emph{correlation/cut polytope}. In another breakthrough, Rothvo\ss~\cite{rothvos14matching} proved a much-conjectured $2^{\Omega(m)}$ lower bound for the $\binom{m}{2}$-dimensional \emph{matching polytope}. By now, many accessible introductions to extended formulations are available; e.g., Roughgarden~\cite[\S5]{roughgarden15communication}, Kaibel~\cite{kaibel11extended}, Conforty~et~al.~\cite{conforti10extended} or their textbook~\cite[\S4.10]{conforti14integer}.

\paragraph{\boldmath $\sqrt{n}$-frontier.}
Both of the results quoted above---while optimal for their respective polytopes---seem to get ``stuck'' at being exponential in the square root of their dimension. In fact, no explicit $n$-dimensional 0/1-polytope (convex hull of a subset of $\{0,1\}^n$) was known with extension complexity asymptotically larger than $2^{\Theta(\sqrt{n})}$. In comparison, Rothvo\ss~\cite{rothvos12some} showed via a counting argument that most $n$-dimensional 0/1-polytopes have extension complexity~$2^{\Omega(n)}$.

\subsection{Our result}
Our main result is to construct an explicit 0/1-polytope of near-maximal extension complexity~$2^{\Omega(n/\log n)}$. Moreover, the polytope can be taken to be the \emph{independent set polytope} $P_G$ of an $n$-node graph $G$, i.e., the convex hull of (the indicator vectors of) the independent sets of $G$. Previously, a lower bound of $2^{\Omega(\sqrt{n})}$ was known for independent set polytopes~\cite{fiorini15exponential}.
\begin{theorem} \label{thm:main}
There is an (explicit) family of $n$-node graphs $G$ with $\xc(P_G)\geq 2^{\Omega(n/\log n)}$.
\end{theorem}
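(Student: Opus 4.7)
The plan is to lower-bound $\xc(P_G)$ via Yannakakis's characterization: $\xc(P_G)$ equals the nonnegative rank of the slack matrix of $P_G$. Restricting attention to the rows given by the clique inequalities $\sum_{v\in C} x_v \leq 1$ and the columns given by the vertices of $P_G$ (indicators of independent sets $I$), we obtain the submatrix $M_G\in\{0,1\}^{\calC\times\calI}$ with $M_G[C,I] = 1 - |C\cap I|$. Thus it suffices to produce a graph $G$ on $n$ vertices such that $\mathrm{rk}_+(M_G) \geq 2^{\Omega(n/\log n)}$. Since $\log\mathrm{rk}_+$ lower-bounds nondeterministic communication complexity, this reduces to showing that the \emph{clique-vs-independent-set} (CIS) relation on $G$ is communication-hard even for nondeterministic (indeed, nonnegative-rank) protocols.

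To build such $G$, the plan is to invoke the connection between extended formulations and monotone circuit depth hinted at in the abstract, instantiated through the Karchmer--Wigderson framework. For a monotone Boolean function $f$, construct a graph $G=G_f$ whose cliques correspond to minterms (1-certificates) of $f$ and whose independent sets correspond to maxterms (0-certificates) of $f$. On such a graph, the CIS search problem is essentially the monotone KW game of $f$, whose deterministic communication complexity equals the monotone circuit depth $\mathrm{depth}_+(f)$. A careful version of this correspondence, tailored to rectangle \emph{covers} of the 0-entries of $M_G$ (rather than partitions), gives $\log\mathrm{rk}_+(M_G) \geq \Omega(\mathrm{depth}_+(f))$, so $\log \xc(P_G) \geq \Omega(\mathrm{depth}_+(f))$.

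It remains to exhibit an explicit $f$ whose monotone depth is near-maximal and whose associated graph is economical. The tool of choice is a \emph{lifting theorem} in the style of Raz--McKenzie: start from a function $f_0$ on $N$ bits which is hard in a decision-tree / query-complexity sense (for instance, a Tseitin contradiction on an expander or a Göös--Pitassi--Watson pointer function), and compose it with an $O(\log N)$-bit gadget $g$ to form $f := f_0 \circ g^{\otimes N}$. The lifting theorem converts query lower bounds for $f_0$ into monotone KW lower bounds for $f$, yielding $\mathrm{depth}_+(f) \geq \Omega(N)$ while $f$ depends on $N\cdot O(\log N)$ input bits. A careful graph construction then produces $G_f$ on $n = O(N \log N)$ vertices, so that $\log\xc(P_G) \geq \Omega(N) = \Omega(n/\log n)$, matching the claimed bound.

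The main obstacle is the \emph{non-classical nature of nonnegative rank}: standard Raz--McKenzie lifting gives deterministic (or nondeterministic) communication lower bounds, whereas we need lower bounds for a measure as strong as nonnegative rank. This forces us to use lifting in a form that preserves a rectangle-cover (or ``conical junta'') complexity measure, likely by way of intermediate combinatorial objects such as the $\onethreename$ / $\threesevenname$ advertised by the paper's macro definitions. A secondary technical issue is engineering the graph $G_f$: one must ensure that every maximal clique/independent set of $G_f$ faithfully encodes a minterm/maxterm of $f$, with no parasitic cliques or IS, while keeping $|V(G_f)|$ only a polylogarithmic factor larger than the number of input variables of $f$. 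This gadget-to-graph reduction is where most of the careful work will concentrate.
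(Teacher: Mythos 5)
Your plan has a fatal problem at the very first step, before the lifting machinery even enters. The slack matrix of $P_G$ with respect to the clique inequalities is a $0/1$ matrix (a clique and an independent set meet in at most one vertex), and for \emph{every} $n$-node graph Yannakakis's classical $O(\log^2 n)$ deterministic protocol for clique-vs-independent-set partitions that matrix into $2^{O(\log^2 n)}$ monochromatic rectangles, so $\rk^+(M_G)\le 2^{O(\log^2 n)}$. No choice of $G$ can make this submatrix witness more than a quasipolynomial lower bound. The paper avoids this by using valid inequalities whose slack is \emph{not} boolean: for the graph $K$ built from a monotone CSP-SAT function, the inequalities $\sum_{i:y_i=1}x_i\le n-1$ (indexed by $0$-inputs $y$) have slack $|x\cap\bar y|-1$, i.e., the number of Karchmer--Wigderson witnesses minus one. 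Relatedly, your claimed implication $\log\rk^+\ \ge\ \Omega(\mathrm{depth}_+(f))$ is the \emph{converse} of the KW/EF connection and is false: bipartite perfect matching has a polynomial-size extended formulation yet monotone depth $\Omega(m)$. A monotone depth lower bound does not suffice; one must lower-bound the strictly stronger ``number of witnesses minus one'' (\witness) game.

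The second gap is the gadget size. A Raz--McKenzie-style lifting with an $O(\log N)$-bit gadget cannot yield $2^{\Omega(n/\log n)}$: translating a lifted problem into a monotone function (and then a graph) costs at least $2^{b}$ in the number of variables when the gadget needs $b$ bits of nondeterministic communication (concretely, the CSP-SAT encoding has $|R|\cdot|\Sigma|^d$ input bits with $|\Sigma|=2^{b}$), so a logarithmic gadget produces a graph with $N^{\Theta(1)}$ nodes and you land back at a $2^{n^{1/c}}$-type bound. Conversely, no known simulation theorem for nonnegative rank / conical juntas works with a constant-size gadget---the paper explicitly flags this as the obstruction to the lifting route. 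This is exactly why the paper's technical core is a direct information-complexity lower bound of $\Omega(n/\log n)$ for the \witness-game of $\TSE_G\circ g^n$ with a fixed $3$-bit gadget (via the \onethreename, \threesevenname, and a Fano-plane cut-and-paste argument), rather than any lifting theorem. Your proposal correctly senses that the ``non-classical nature of nonnegative rank'' is the hard part, but the step you defer to ``intermediate combinatorial objects'' is the entire proof, and the two concrete reductions you do commit to (clique inequalities, depth-based lower bound, log-size gadget) each fail.
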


In fact, our graph family has bounded degree. Hence, using known reductions, we get as a corollary quantitative improvements---from $2^{\Omega(\sqrt{n})}$ to $2^{\Omega(n/\log n)}$---for the extension complexity of, for instance, \emph{3SAT} and \emph{knapsack polytopes}; see~\cite{avis14extension,pokutta13note} for details.

We strongly conjecture that our graph family actually satisfies $\xc(P_G)\geq 2^{\Omega(n)}$, i.e., that the $\log n$ factor in the exponent is an artifact of our proof technique. We give concrete evidence for this by proving an optimal bound for a certain \emph{query complexity} analogue of \autoref{thm:main}. In particular, the conjectured bound $\xc(P_G)\geq 2^{\Omega(n)}$ would follow from quantitative improvements to the known query-to-communication simulation theorems (\cite{goos15rectangles} in particular). Incidentally, this also answers a question of Lov{\'a}sz, Naor, Newman, and Wigderson~\cite{lovasz95search}: we obtain a maximal $\Omega(n)$ lower bound on the randomized query complexity of a search problem with constant certificate complexity.

\subsection{Our approach} \label{sec:approach}

Curiously enough, an analogous $\sqrt{n}$-frontier existed in the seemingly unrelated field of \emph{monotone circuits}: Raz and Wigderson~\cite{raz92monotone} proved an $\Omega(m)$ lower bound for the depth of any monotone circuit computing the \emph{matching function} on $\binom{m}{2}$ input bits. This remained the largest monotone depth bound for an explicit function until the recent work of G\"o\"os and Pitassi~\cite{goos14communication}, who exhibited a function with monotone depth $\Omega(n/\log n)$. In short, our idea is to prove an extension complexity analogue of this latter result.

The conceptual inspiration for our construction is a relatively little-known connection between Karchmer--Wigderson games~\cite{karchmer88monotone} (which characterize circuit depth) and extended formulations. This ``\KWEF connection'' (see \autoref{sec:kw-ef} for details) was pointed out by Hrube\v{s}~\cite{hrubes12nonnegative} as a nonnegative analogue of a classic rank-based method of Razborov~\cite{razborov90applications}. In this work, we focus only on the monotone setting. For any monotone $f\colon\{0,1\}^n\to\{0,1\}$ we can study the convex hull of its $1$-inputs, namely, the polytope
\[
F~\coloneqq~\conv f^{-1}(1).
\]
The upshot of the \KWEF connection is that extension complexity lower bounds for $F$ follow from a certain type of \emph{strengthening} of monotone depth lower bounds for $f$. For example, using this connection, it turns out that Rothvo\ss's result~\cite{rothvos14matching} implies the result of Raz and Wigderson~\cite{raz92monotone} in a simple black-box fashion (\autoref{sec:matching}).

Our main technical result is to strengthen the existing monotone depth lower bound from~\cite{goos14communication} into a lower bound for the associated polytope (though we employ substantially different techniques than were used in that paper). The key communication search problem studied in \cite{goos14communication} is a communication version of the well-known \emph{Tseitin} problem (see \autoref{sec:tseitin-def} for definitions), which has especially deep roots in proof complexity (e.g., \cite[\S18.7]{jukna12boolean}) and has also been studied in query complexity~\cite{lovasz95search}. We use information complexity techniques to prove the required $\Omega(n/\log n)$ communication lower bound for the relevant variant of the Tseitin problem; information theoretic tools have been used in extension complexity several times~\cite{braverman13information,braun13common,braun15matching}. One relevant work is Huynh and Nordstr\"om~\cite{huynh12virtue} (predecessor to \cite{goos14communication}), whose information complexity arguments we extend in this work.

(Instead of using information complexity, an alternative seemingly promising approach would be to ``lift'' a strong enough query complexity lower bound for Tseitin into communication complexity. Unfortunately, this approach runs into problems due to limitations in existing query-to-communication simulation theorems; we discuss this in \autoref{sec:query-lb}.)

\autoref{thm:main} follows by reductions from the result for Tseitin (\autoref{sec:reductions}). Indeed, it was known that the Tseitin problem reduces to the monotone KW game associated with an $f\colon\{0,1\}^{O(n)}\to\{0,1\}$ that encodes (in a monotone fashion) a certain CSP satisfiability problem. This gives us an extension complexity lower bound for the (explicit) polytope $F \coloneqq \conv f^{-1}(1)$. As a final step, we give a reduction from $F$ to an independent set polytope.

\subsection{Background}

Let $M$ be a nonnegative matrix. The \emph{nonnegative rank} of $M$, denoted $\rk^+(M)$, is the minimum $r$ such that $M$ can be decomposed as a sum $\sum_{i\in [r]} R_i$ where each $R_i$ is a rank-$1$ nonnegative matrix.

\emph{Randomized protocols.}
Faenza et al.~\cite{faenza14extended} observed that a nonnegative rank decomposition can be naturally interpreted as a type of randomized protocol that computes the matrix $M$ ``in expectation''. We phrase this connection precisely as follows: $\log\rk^+(M)+\Theta(1)$ is the minimum communication cost of a private-coin protocol $\Pi$ whose acceptance probability on each input $(x,y)$ satisfies $\Pr[\Pi(x,y)\text{ accepts}]=\alpha\cdot M_{x,y}$ where $\alpha>0$ is an absolute constant of proportionality (depending on $\Pi$ but not on $x,y$). All communication protocols in this paper are private-coin.

\emph{Slack matrices.}
The extension complexity of a polytope $P=\{x\in\R^n: Ax \geq b\}$ can be characterized in terms of the nonnegative rank of the \emph{slack matrix $M=M(P)$} associated with~$P$. The entries of $M$ are indexed by $(v,i)$ where $v\in P$ is a vertex of~$P$ and $i$ refers to the $i$-th facet-defining inequality $A_i x\geq b_i$ for $P$. We define $M_{v,i} \coloneqq A_i v-b_i\geq 0$ as the distance (\emph{slack}) of the $i$-th inequality from being tight for $v$. Yannakakis~\cite{yannakakis91expressing} showed that $\xc(P)=\rk^+(M(P))$.

A convenient fact for proving lower bounds on $\rk^+(M)$ is that the nonnegative rank is unaffected by the addition of columns to $M$ that each record the slack between vertices of $P$ and some valid (but not necessarily facet-defining) inequality for $P$. For notation, let $P\subseteq Q$ be two nested polytopes (in fact, $Q$ can be an unbounded polyhedron). We define $M(P;Q)$ as the slack matrix whose rows correspond to vertices of $P$ and columns correspond to the facets of $Q$ (hence $M(P;P)=M(P)$). We have $\rk^+(M(P))\ge\rk^+(M(P)\cup M(P;Q))-1\ge\rk^+(M(P;Q))-1$ where ``$\cup$'' denotes concatenation of columns.\footnote{Specifically, Farkas's Lemma implies that the slack of any valid inequality for $P$ can be written as a nonnegative linear combination of the slacks of the facet-defining inequalities for $P$, plus a nonnegative constant \cite[Proposition 1.9]{ziegler95lectures}. Thus if we take $M(P)\cup M(P;Q)$ and subtract off (possibly different) nonnegative constants from each of the ``new'' columns $M(P;Q)$, we get a matrix each of whose columns is a nonnegative linear combination of the ``original'' columns $M(P)$ and hence has the same nonnegative rank as $M(P)$. Since we subtracted off a nonnegative rank-$1$ matrix, we find that $\rk^+(M(P)\cup M(P;Q))\le\rk^+(M(P))+1$.} We summarize all the above in the following.
\begin{fact} \label{fact:xc-rk}
For all polytopes $P\subseteq Q$, we have $\xc(P) = \rk^+(M(P)) \geq \rk^+(M(P;Q))-1$.
\end{fact}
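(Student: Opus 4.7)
The plan is to combine Yannakakis's classical theorem $\xc(P) = \rk^+(M(P))$ with a short structural observation: adjoining the columns of $M(P;Q)$ to $M(P)$ can inflate the nonnegative rank by at most one. Since deleting columns never increases nonnegative rank, this immediately yields the desired inequality $\rk^+(M(P;Q)) \le \rk^+(M(P)) + 1$, which rearranges to the claim.

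First I would invoke Yannakakis's theorem to replace $\xc(P)$ by $\rk^+(M(P))$, reducing the statement to a purely linear-algebraic claim about slack matrices. This equality is standard and is used as a black box throughout the extension-complexity literature.

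Second I would apply Farkas's Lemma in the polyhedral form cited in the footnote: any valid inequality $a^\top x \ge \beta$ for $P$ can be written as $a^\top x - \beta = \gamma + \sum_i \lambda_i (A_i x - b_i)$ with $\gamma, \lambda_i \ge 0$, where $A_i x \ge b_i$ ranges over the facet-defining inequalities of $P$. Applied to each facet of $Q$ and evaluated at each vertex $v$ of $P$, this says that every column of $M(P;Q)$ is a nonnegative combination of the columns of $M(P)$ plus a nonnegative constant vector (the $\gamma$ for that facet, repeated across all rows). Hence one can subtract a single nonnegative rank-$1$ matrix, whose columns are the constant shifts $\gamma$ and whose rows are the all-ones vector, from $M(P) \cup M(P;Q)$ to obtain a matrix in which every column lies in the nonnegative cone generated by the columns of $M(P)$. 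That residual matrix therefore has $\rk^+$ at most $\rk^+(M(P))$, and subadditivity of $\rk^+$ under sums of nonnegative matrices gives $\rk^+(M(P) \cup M(P;Q)) \le \rk^+(M(P)) + 1$. Dropping columns to recover $M(P;Q)$ then finishes the proof.

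The only subtle point, and hence the ``main obstacle'' such as it is, is tracking the unavoidable $+1$: if the Farkas decomposition carried no additive constant $\gamma$, each new column would lie directly in the conical hull of the old columns and there would be no rank inflation, but the presence of the constant term forces us to absorb an all-ones contribution as a single extra rank-$1$ nonnegative factor. Everything else is routine linear algebra.
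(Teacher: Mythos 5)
Your proposal is correct and matches the paper's own argument essentially verbatim: Yannakakis's theorem, then the Farkas decomposition of each new column as a nonnegative combination of the facet slacks plus a nonnegative constant, absorbing the constants into a single nonnegative rank-$1$ matrix to get $\rk^+(M(P)\cup M(P;Q))\le\rk^+(M(P))+1$, and finally column deletion. Nothing to add.
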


\section{\texorpdfstring{\KWEF}{KW/EF} Connection} \label{sec:kw-ef}

We now describe the connection showing that EF lower bounds follow from a certain type of strengthening of lower bounds for monotone KW games (and similarly, lower bounds for monotone KW games follow from certain strong enough EF lower bounds). This is not directly used in the proof of \autoref{thm:main}, but it serves as inspiration by suggesting the approach we use in the proof.

\subsection{Definitions} \label{sec:def-kw}
Let $f\colon\{0,1\}^n\to\{0,1\}$ be a monotone function. We define $\KWp(f)$ as the deterministic communication complexity of the following \emph{monotone KW game} associated with $f$.
\begin{myframe}{\boldmath$\KWp$-game}{0.4}
\begin{center}
\vspace{-1mm}
\setlength{\arraycolsep}{5pt}
\begin{tabular}{rl}
\emph{Input:}
& Alice gets $x\in f^{-1}(1)$, and Bob gets $y\in f^{-1}(0)$. \\
\emph{Output:}
& An index $i\in[n]$ such that $x_i=1$ and $y_i=0$.
\end{tabular}
\end{center}
\end{myframe}

We often think of $x$ and $y$ as subsets of $[n]$. In this language, a feasible solution for the $\KWp$-game is an $i\in x\cap \bar{y}$ where $\bar{y}\coloneqq [n]\smallsetminus y$. Given a monotone $f$, we denote by $F \coloneqq \conv f^{-1}(1)$ the associated polytope. We can express the fact that any pair $(x,y)\in f^{-1}(1)\times f^{-1}(0)$ admits at least one witness $i\in x\cap\bar{y}$ via the following linear inequality:
\begin{equation} \label{eq:kw-ineq}
\sum_{i\,:\,y_i=0} x_i~\geq~1.
\end{equation}
Since \eqref{eq:kw-ineq} is valid for all the vertices $x\in F$, it is valid for the whole polytope $F$. Define $F_{\KW} \supseteq F$ as the polyhedron whose facets are determined by the inequalities~\eqref{eq:kw-ineq}, as indexed by $0$-inputs~$y$. The $(x,y)$-th entry in the slack matrix $M(F;F_{\KW})$ is then $\sum_{i\,:\,y_i=0} x_i-1$. In words, this quantity counts the number of witnesses in the $\KWp$-game on input $(x,y)$ minus one.

More generally, let $S\subseteq \calX\times\calY\times\calQ$ be \emph{any} communication search problem (not necessarily a $\KWp$-game, even though any $S$ can be reformulated as such~\cite[Lemma 2.3]{gal01characterization}). Here $\calQ$ is some set of solutions/witnesses, and letting $S(x,y)\coloneqq\{q\in\calQ:(x,y,q)\in S\}$ denote the set of feasible solutions for input $(x,y)$, we assume that $S(x,y)\neq\emptyset$ for all $(x,y)$. We associate with $S$ the following natural \emph{``number of witnesses minus one''} communication game.
\begin{myframe}{\boldmath\witness-game}{0.37}
\begin{center}
\vspace{-1mm}
\setlength{\arraycolsep}{5pt}
\begin{tabular}{rl}
\emph{Input:}
& Alice gets $x\in \calX$, and Bob gets $y\in \calY$. \\
\emph{Output:}
& Accept with probability proportional to $|S(x,y)|-1$.
\end{tabular}
\end{center}
\end{myframe}
The communication complexity of this game is simply $\log\rk^+(M^S)+\Theta(1)$ where $M^S_{x,y}\coloneqq |S(x,y)|-1$.

\subsection{The connection}
What Hrube\v{s}~\cite[Proposition 4]{hrubes12nonnegative} observed was that an efficient protocol for a search problem~$S$ implies an efficient protocol for the associated \witness-game. In particular, for $\KWp$-games,
\begin{equation}\label{eq:kw-ef}
\log\rk^+(M(F;F_{\KW}))~\leq~O(\KWp(f)).
\tag{\KWEF}
\end{equation}
The private-coin protocol for $M(F;F_{\KW})$ computes as follows. On input $(x,y)\in f^{-1}(1)\times f^{-1}(0)$ we first run the optimal deterministic protocol for the $\KWp$-game for $f$ to find a particular $i\in[n]$ witnessing $x_i=1$ and $y_i=0$. Then, Alice uses her private coins to sample a $j\in[n]\smallsetminus\{i\}$ uniformly at random, and sends this $j$ to Bob. Finally, the two players check whether $x_j=1$ and $y_j=0$ accepting iff this is the case. The acceptance probability of this protocol is proportional to the number of witnesses minus one, and the protocol has cost $\KWp(f) + \log n + O(1)\leq O(\KWp(f))$ (where we assume w.l.o.g.\ that $f$ depends on all of its input bits so that $\KWp(f)\geq\log n$).

\subsection{Example: Matchings} \label{sec:matching}

\emph{Rothvo\ss~vs.{~\!}Raz--Wigderson.}
Consider the monotone function $f\colon\{0,1\}^{\binom{m}{2}}\to\{0,1\}$ that outputs~$1$ iff the input, interpreted as a graph on $m$ nodes ($m$ even), contains a perfect matching. Then $F\coloneqq \conv f^{-1}(1)$ is the perfect matching polytope. The inequalities \eqref{eq:kw-ineq} for $f$ happen to include the so-called ``odd set'' inequalities, which were exploited by Rothvo\ss~\cite{rothvos14matching} in showing that $\log\rk^+(M(F;F_{\KW}))\geq \Omega(m)$. Applying the \eqref{eq:kw-ef} connection to Rothvo\ss's lower bound implies in a black-box fashion that $\KWp(f)\geq \Omega(m)$, which is the result of Raz and Wigderson~\cite{raz92monotone}.

\bigskip\noindent\emph{Converse to \eqref{eq:kw-ef}?}
It is interesting to compare the above with the case of \emph{bipartite} perfect matchings. Consider a monotone $f\colon\{0,1\}^{m\times m}\to\{0,1\}$ that takes a bipartite graph as input and outputs $1$ iff the graph contains a perfect matching. It is well-known that $F\coloneqq\conv f^{-1}(1)$ admits a polynomial-size extended formulation~\cite[Theorem 18.1]{schrijver03combinatorial}. By contrast, the lower bound $\KWp(f)\geq\Omega(m)$ from~\cite{raz92monotone} continues to hold even in the bipartite case. This example shows that the converse inequality to \eqref{eq:kw-ef} does not hold in general. Hence, a lower bound for the \witness-game can be a strictly stronger result than a similar lower bound for the $\KWp$-game.

\subsection{Minterms and maxterms} \label{sec:minterm}

A \emph{minterm} $x\in f^{-1}(1)$ is a minimal $1$-input in the sense that flipping any $1$-entry of $x$ into a $0$ will result in a $0$-input. Analogously, a \emph{maxterm} $y\in f^{-1}(0)$ is a maximal $0$-input. It is a basic fact that solving the $\KWp$-game for minterms/maxterms is enough to solve the search problem on any input: Say that Alice's input $x$ is not a minterm. Then Alice can replace $x$ with any minterm $x'\subseteq x$ and run the protocol on $x'$. A witness $i\in[n]$ for $(x',y)$ works also for $(x,y)$. A similar fact holds for the \witness-game: we claim that the nonnegative rank does not change by much when restricted to minterms/maxterms. Say that Alice's input $x$ is not a minterm. Then Alice can write $x=x'\cup x''$ (disjoint union) where $x'$ is a minterm. Then $|x\cap \bar{y}| -1 = (|x'\cap\bar{y}|-1) + |x''\cap \bar{y}|$ where the first term is the \witness-game for $(x',y)$ and the second term has nonnegative rank at most~$n$. (A similar argument works if Bob does not have a maxterm.)

\section{Tseitin Problem} \label{sec:tseitin-def}

\subsection{Query version}

Fix a connected node-labeled graph $G=(V,E,\ell)$ where $\ell\in \Z_2^V$ has \emph{odd weight}, i.e., $\sum_{v\in V}\ell(v)=1$ where the addition is modulo $2$. For any edge-labeling $z\in \Z_2^E$ and a node $v\in V$ we write concisely $z(v)\coloneqq \sum_{e\ni v} z(e)$ for the mod-$2$ sum of the edge-labels adjacent to $v$.
\begin{myframe}{Tseitin problem:~\normalfont $\TSE_G$}{0.3}
\begin{center}
\vspace{-4mm}
\setlength{\arraycolsep}{5pt}
\begin{tabular}{rl}
\emph{Input:}
& Labeling $z\in \Z_2^E$ of the edges. \\[0mm]
\emph{Output:}
& A node $v\in V$ containing a \emph{parity violation} $z(v) \neq \ell(v)$.
\end{tabular}
\end{center}
\end{myframe}
As a sanity check, we note that on each input~$z$ there must exist at least one node with a parity violation. This follows from the fact that, since each edge has two endpoints, the sum $\sum_v z(v)$ is even, whereas we assumed that the sum $\sum_v \ell(v)$ is odd.

\paragraph{Basic properties.}
The above argument implies more generally that the set of violations $\viol(z)\coloneqq \{ v\in V: z(v)\neq \ell(v)\}$ is always of odd size. Conversely, for any odd-size set $S\subseteq V$ we can design an input~$z$ such that $\viol(z)=S$. To see this, it is useful to understand what happens when we \emph{flip a path} in an input $z$. Formally, suppose $p\in \Z_2^E$ is (an indicator vector of) a path. Define $z^p$ as $z$ with bits on the path $p$ flipped (note that $z^p= z+p\in\Z_2^E$; however, the notation $z^p$ will be more convenient later). Flipping $p$ has the effect of flipping whether each endpoint of $p$ is a violation. More precisely, the violated nodes in $z^p$ are related to those in $z$ as follows: (i) if both endpoints of $p$ are violated in $z$ then the flip causes that pair of violations to disappear; (ii) if neither endpoint of $p$ is violated in $z$, then the flip introduces a pair of new violations; (iii) if precisely one endpoint of $p$ was violated in $z$, then the flip moves a violation from one endpoint of $p$ to the other. By applying (i)--(iii) repeatedly in a connected graph $G$, we can design an input~$z$ where $\viol(z)$ equals any prescribed odd-size set $S$.

If $z$ and $z'$ have the same set of violations, $\viol(z)=\viol(z')$, then their difference $q\coloneqq z-z'\in\Z_2^E$ satisfies $q(v)=0$ for all $v\in V$. That is, $q$ is an \emph{eulerian} subgraph of $G$. On the other hand, for any eulerian graph $q$, the inputs $z$ and $z^q$ have the same violations. Consequently, to generate a random input with the same set of violations as some fixed $z$, we need only pick a random eulerian graph $q$ and output $z^q$. (Eulerian graphs form a subspace of $\Z_2^E$, sometimes called the \emph{cycle space} of $G$.)

\subsection{Communication version}

The communication version of the Tseitin problem is obtained by composing (or \emph{lifting}) $\TSE_G$ with a constant-size two-party gadget $g\colon\calX\times\calY\to\{0,1\}$. In the lifted problem $\TSE_G\circ g^n$, where $n\coloneqq |E|$, Alice gets $x\in\calX^n$ as input, Bob gets $y\in\calY^n$ as input, and their goal is to find a node $v\in V$ that is violated for
\[
z~\coloneqq~g^n(x,y)~=~(g(x_1,y_1),\ldots,g(x_n,y_n)).
\]
We define our gadget precisely in \autoref{sec:gadget}. For now---in particular, for the reductions presented in the next section---the only important property of our gadget is that $|\calX|,|\calY|\leq O(1)$.

\subsection{Statement of result}

We prove that there is a family of bounded-degree graphs $G$ such that the \witness-game associated with $\TSE_G\circ g^n$ requires $\Omega(n/\log n)$ bits of communication. We prove our lower bound assuming only that $G=(V,E)$ is well-connected enough as captured by the following definition (also used in~\cite{goos14communication}). A graph $G$ is \emph{$k$-routable} iff there is a set of $2k+1$ nodes $T\subseteq V$ called \emph{terminals} such that for any \emph{pairing} $\calP\coloneqq \{\{s_i,t_i\}:i\in [\kappa]\}$ (set of pairwise disjoint pairs) of $2\kappa$ terminals ($\kappa\le k$), there exist $\kappa$ edge-disjoint paths (called \emph{canonical} paths for $\calP$) such that the $i$-th path connects $s_i$ to $t_i$. Furthermore, we tacitly equip $G$ with an arbitrary odd-weight node-labeling.

\begin{theorem} \label{thm:communication-lb}
There is a constant-size $g$ such that for every $k$-routable graph $G$ with $n$ edges, the \witness-game for $\TSE_G\circ g^n$ requires $\Omega(k)$ bits of communication.
\end{theorem}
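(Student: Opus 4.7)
By the Faenza et al.\ characterization of nonnegative rank, a protocol of cost $c$ witnessing $\log\rk^+(M^S)\le c$ for $S=\TSE_G\circ g^n$ is a private-coin protocol $\Pi$ whose acceptance probability on each $(x,y)\in\calX^n\times\calY^n$ is proportional to $|\viol(g^n(x,y))|-1$. The plan is to exhibit a distribution $\mu$ on inputs under which any such $\Pi$ must reveal $\Omega(k)$ bits of information, so that $c\ge\Omega(k)$.

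\textbf{Hard distribution.} I will use $k$-routability to fix $2k+1$ terminals $t_0,s_1,t_1,\ldots,s_k,t_k$ and edge-disjoint canonical paths $p_1,\ldots,p_k$ with $p_i$ joining $s_i$ to $t_i$. Fix a baseline $z^{\star}\in\Z_2^E$ whose unique violation is $t_0$, and for $b\in\{0,1\}^k$ define $z_b\coloneqq z^{\star}+\sum_{i:b_i=1}p_i$. By the path-flipping analysis recalled immediately above the theorem, $\viol(z_b)=\{t_0\}\cup\bigcup_{i:b_i=1}\{s_i,t_i\}$, so $|\viol(z_b)|-1=2|b|_1$. Let $\mu$ draw $b$ uniformly in $\{0,1\}^k$, then an independent uniformly random eulerian subgraph $q\in\Z_2^E$, and finally, independently per edge $e$, a uniform gadget pair $(x_e,y_e)$ with $g(x_e,y_e)=(z_b+q)_e$. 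The constant-size gadget $g$ (to be pinned down later) will be chosen so that its uniform preimage distribution is hard in the information-theoretic sense used by Huynh--Nordstr\"om.

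\textbf{Information argument.} To prove $I_\mu(\Pi;b)=\Omega(k)$, I will extend the direct-sum framework of Huynh--Nordstr\"om to the \witness-setting. For each coordinate $i\in[k]$, condition on the cycle-space perturbation $q$ and on all gadget inputs on edges outside $p_i$; the remaining uncertainty is exactly the bit $b_i$, and the conditional instance is a single-path Tseitin problem. A gadget-level lemma --- the analogue of Huynh--Nordstr\"om's core information inequality for this gadget --- will imply that the transcript of $\Pi$ must carry $\Omega(1)$ bits of mutual information about $b_i$, since otherwise $\Pi$ would assign nearly the same acceptance probability to $b_i=0$ and $b_i=1$, contradicting the $2\alpha$ gap in $|\viol|-1$ caused by flipping $b_i$. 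Summing over $i$ via the chain rule yields $I_\mu(\Pi;b)=\Omega(k)$ and hence $c=\Omega(k)$.

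\textbf{Main obstacle.} The chief difficulty is that, unlike a deterministic $\KWp$ protocol which outputs a witness index, the \witness-protocol has only a scalar accept/reject output whose probability depends on $|\viol|-1$; extracting $\Omega(k)$ bits from this scalar requires opening up the internal rectangle decomposition of $\Pi$ (viewing it as a nonnegative rank decomposition of $M^S$) rather than reading off an explicit answer. The slice $b=0$ is especially delicate because the acceptance probability is identically zero there and standard conditioning arguments degenerate, so the induction/averaging must be done carefully on the complementary slice. Engineering a constant-size gadget whose per-coordinate information bound is $\Omega(1)$ rather than $o(1)$ --- anything weaker would yield only a $\sqrt{k}$ bound via Cauchy--Schwarz --- is the technical heart of the proof and the point where the argument of G\"o\"os--Pitassi must be strengthened from a monotone KW lower bound to the \witness-version demanded by the \KWEF connection.
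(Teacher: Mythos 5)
There is a genuine gap, and it sits exactly where you yourself locate ``the technical heart'': the claim that the transcript must carry $\Omega(1)$ bits of mutual information about each coordinate $b_i$. The justification you offer --- that otherwise $\Pi$ would assign nearly the same acceptance probability to the slices $b_i=0$ and $b_i=1$, contradicting the gap of $2\alpha$ in $|\viol(z_b)|-1$ --- cannot work, because $\alpha$ is not an absolute constant: it is a proportionality factor chosen by the protocol and may be exponentially small in the cost. Low mutual information $I(\Pi;b_i)$ only forces the two conditional transcript distributions to be within statistical distance roughly $\sqrt{I(\Pi;b_i)}$ (Pinsker), hence only forces the two acceptance probabilities to agree up to that additive error; to contradict a gap of $2\alpha$ you would therefore need $I(\Pi;b_i)\ge\Omega(\alpha^2)$, which summed over the $k$ coordinates yields only $\Omega(k\alpha^2)$ --- vacuous once $\alpha$ is small. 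No choice of constant-size gadget can repair this, since the obstruction has nothing to do with the gadget: a protocol whose only obligation is to accept with a prescribed (tiny) probability is simply not forced to ``learn'' any coordinate of $b$ with constant advantage. So the target $I_\mu(\Pi;b)=\Omega(k)$ is the wrong invariant to aim for by this route, and the Huynh--Nordstr\"om framework, which is designed for protocols that must output an explicit answer, does not transfer to the \witness-game in the direction you propose.

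The paper runs the information-complexity machinery in the opposite direction. One assumes the cost is $o(k)$, uses the direct-sum property to conclude that the transcript has $o(1)$ information about the gadgets on a typical flipped path, and then converts this \emph{low}-information hypothesis into structural statements about the accepting-transcript distributions on inputs with very few violations: a Homogeneity Lemma; a ``1-vs-3'' near-disjointness statement (the accepting transcripts of two 3-violation inputs obtained from a common 1-violation input by flipping edge-disjoint canonical paths must be almost disjoint, else a cut-and-paste of transcript rectangles would make $\Pi$ accept a 1-violation input with positive probability); and a ``3-vs-7'' overflow statement (typical accepting transcripts of a 3-violation input persist when two further disjoint paths are flipped). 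These are combined over the seven lines of the Fano plane --- seven 3-violation inputs all overflowing onto one common 7-violation input while being pairwise near-disjoint --- to show that some 7-violation input is accepted with probability at least $(14-o(1))\alpha>6\alpha$, contradicting the exact acceptance-probability requirement. The contradiction is thus extracted from the acceptance probabilities themselves, not from an information lower bound; none of this combinatorial core (the specially engineered gadget with its window and directed-flip bijections, the cut-and-paste step, the Fano-plane counting) appears in your outline, and without a replacement for it the plan does not go through.
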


If we choose $G$ to be a sufficiently strong expander graph, we may take $k=\Theta(n/\log n)$ as shown by Frieze et al.~\cite{frieze00optimal,frieze01edge}. Alternative constructions with $k=\Theta(n/\log n)$ exist based on bounded-degree ``butterfly'' graphs; see~\cite[\S5]{nordstroem15new} for an exposition.

\begin{corollary} \label{cor:expander}
There is a constant-size $g$ and an explicit bounded-degree graph $G$ with $n$ edges such that the \witness-game for $\TSE_G\circ g^n$ requires $\Omega(n/\log n)$ bits of communication.
\end{corollary}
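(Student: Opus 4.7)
The plan is to invoke Theorem \ref{thm:communication-lb} on a suitable explicit graph family. What I need is an explicit infinite family of $n$-edge bounded-degree graphs $G$ that are $k$-routable with $k = \Omega(n/\log n)$; plugging any such $G$ into the theorem (together with its supplied constant-size gadget $g$) then immediately yields $\Omega(k) = \Omega(n/\log n)$ bits of communication for the associated \witness-game for $\TSE_G \circ g^n$.

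For the graph family I have two candidates in mind. The first is to take $G$ to be an explicit constant-degree expander on $\Theta(n)$ nodes (e.g., an LPS or Ramanujan construction) and invoke the routing theorem of Frieze et al.\ \cite{frieze00optimal,frieze01edge}, which shows that sufficiently strong constant-degree expanders are $\Theta(n/\log n)$-routable. The second option is to take $G$ to be an explicit constant-degree butterfly-type network as described in Nordstr\"om \cite[\S5]{nordstroem15new}, whose $\Theta(n/\log n)$-routability is well known via classical switching-network (Bene\v{s}-style) arguments. Either choice satisfies all the hypotheses: bounded degree, $n$ edges (up to constants), and $k$-routability with $k = \Theta(n/\log n)$. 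I then equip $G$ with an arbitrary odd-weight node labeling $\ell$, which always exists (e.g., flip the label of a single node of any all-zero labeling), to put us in the setting of Theorem \ref{thm:communication-lb}.

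There is no real technical obstacle here: the heavy lifting for the lower bound is entirely inside Theorem \ref{thm:communication-lb}, and the only task for the corollary is to select a graph family that is simultaneously explicit, bounded-degree, and $\Omega(n/\log n)$-routable. Both cited constructions already verify all three properties, so once Theorem \ref{thm:communication-lb} is in hand the corollary follows by a single application. If I were writing this out in full, I would pick one construction (the butterfly is cleanest since its routability is essentially by definition rather than requiring a spectral argument) and state the resulting bound as an immediate consequence of Theorem \ref{thm:communication-lb}.
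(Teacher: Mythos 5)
Your proposal is correct and matches the paper's own (very short) derivation: both invoke Theorem~\ref{thm:communication-lb} on an explicit bounded-degree graph family that is $\Theta(n/\log n)$-routable, citing the same expander-routing results of Frieze et al.\ and the butterfly-graph alternative from Nordstr\"om's survey. No gaps.
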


As a bonus, we also prove that the \emph{query complexity} of the \witness-game for $\TSE_G$ is $\Omega(n)$ on any expander $G$ (see \autoref{sec:query-lb}).

\section{Reductions} \label{sec:reductions}

The goal of this section is to show, via reductions, that a lower bound on the \witness-game for $\TSE_G\circ g^n$ (where $G=(V,E)$ is of bounded degree and $n\coloneqq|E|$) translates directly into a lower bound on the extension complexity of $P_K$ for an $O(n)$-node bounded-degree graph $K$.

\subsection{Definition: Monotone CSP-SAT}

We start by describing a way of representing constraint satisfaction problems (CSP) as a monotone function; this was introduced in~\cite{goos14communication} and further studied by Oliveira~\cite[Chapter~3]{oliveira15unconditional}. The function is defined relative to some finite alphabet $\Sigma$ and a fixed constraint topology determined by a bipartite graph~$H\coloneqq(L\cup R,E)$. The left nodes $L$ are thought of as \emph{variables} (taking values in $\Sigma$) and the right nodes $R$ correspond to \emph{constraints}. For a constraint $c\in R$, let $\var(c)\subseteq L$ denote the variables involved in $c$. Let $d$ denote the maximum degree of a node in $R$. The function $\SAT=\SAT_{\Sigma,H}\colon \{0,1\}^m\to\{0,1\}$, where $m\leq |R|\cdot|\Sigma|^d$, is now defined as follows. An input $x\in\{0,1\}^m$ defines a CSP instance by specifying, for each $c\in R$, a truth table $\Sigma^{\var(c)}\to\{0,1\}$ that records which assignments to the variables $\var(c)$ satisfy $c$. Then $\SAT(x)\coloneqq 1$ iff there is some global assignment $L\to\Sigma$ that satisfies all the constraints as specified by $x$. This is monotone: if we flip any $0$ into a $1$ in the truth table of a constraint, we are only making the constraint easier to satisfy.

\subsection{From Tseitin to CSP-SAT} \label{sec:tse-to-sat}

For completeness, we present the reduction (due to \cite[\S5.1]{goos14communication}) from the search problem $\TSE_G\circ g^n$ to the $\KWp$-game for $\SAT=\SAT_{\calX,H}\colon\{0,1\}^m\to\{0,1\}$. Here the alphabet is $\calX$ and the bipartite graph $H$ is defined on $E(G)\cup V(G)$ such that there is an edge $(e,v)\in E(H)$ iff $v\in e$. Note that $m\leq O(n)$ provided that $|\calX|\leq O(1)$ and that $G$ is of bounded degree.

On input $(x,y)$ to $\TSE_G\circ g^n$ the two players proceed as follows:
\begin{itemize}[itemsep=2mm]
\item Alice maps her $x\in\calX^{E(G)}$ into a CSP whose sole satisfying assignment is $x$. Namely, for each constraint $v\in V(G)$, the truth table $\calX^{\var(v)}\to\{0,1\}$ is all-0 except for a unique 1 in position~$x|_{\var(v)}$ (restriction of $x$ to coordinates in $\var(v)$).
\item Bob maps his $y\in\calY^{E(G)}$ into an unsatisfiable CSP. Namely, for each constraint $v\in V(G)$, the truth table $t_v\colon\calX^{\var(v)}\to\{0,1\}$ is given by $t_v(\hat{x})\coloneqq 1$ iff $(g(\hat{x}_e,y_e))_{e\in\var(v)}\in\{0,1\}^{\var(v)}$ is a partial edge-labeling of $G$ that does \emph{not} create a parity violation on $v$.
\end{itemize}
Let us explain why Bob really produces a $0$-input of $\SAT$. Suppose for contradiction that there is an~$\hat{x}\in\calX^{E(G)}$ that satisfies all of Bob's constraints: $t_v(\hat{x}|_{\var(v)})=1$ for all $v$. By definition, this means that $z\coloneqq g^n(\hat{x},y)$ is an input to $\TSE_G$ without any violated nodes---a contradiction.

This reduction is parsimonious: it maps witnesses to witnesses in 1-to-1 fashion. Indeed, a node~$v$ is violated for $\TSE_G\circ g^n$ if and only if Alice's truth table for $v$ has its unique $1$ in a coordinate where Bob has a $0$. In conclusion, the \witness-game associated with (the $\KWp$-game for) $\SAT$ is at least as hard as the \witness-game for $\TSE_G\circ g^n$.

\subsection{From CSP-SAT to independent sets} \label{sec:sat-to-is}

As a final step, we start with $\SAT=\SAT_{\Sigma,H}\colon\{0,1\}^m\to\{0,1\}$ and construct an $m$-node graph $K$ such that a slack matrix of the independent set polytope $P_K$ embeds the \witness-game for $\SAT$ (restricted to minterms). Let $H\coloneqq(L\cup R,E)$ (as above) and define $n\coloneqq |R|$ (above we had $n=|L|$, but in our case $|L|=\Theta(|R|)$ anyway).

The $m$-node graph $K$ is defined as follows (this is reminiscent of a reduction from~\cite{feige96interactive}).
\begin{itemize}[itemsep=2mm]
\item The nodes of $K$ are in 1-to-1 correspondence with the input bits of $\SAT$. That is, for each constraint $c\in R$ we have $|\Sigma^{\var(c)}|$ many nodes in $K$ labeled with assignments $\var(c)\to\Sigma$.
\item There is an edge between any two nodes whose assignments are \emph{inconsistent} with one another. (Here $\phi_i\colon\var(c_i)\to\Sigma$, $i\in\{1,2\}$, are inconsistent iff there is some $e\in\var(c_1)\cap\var(c_2)$ such that $\phi_1(e)\neq \phi_2(e)$.) In particular, the truth table of each constraint becomes a clique.
\end{itemize}
(It can be seen that $K$ has bounded degree if $H$ has bounded left- and right-degree, which it does after our reduction from Tseitin for a bounded-degree $G$.)

The key property of this construction is the following:
\begin{center}
\itshape The minterms of $\SAT$ are precisely the (indicator vectors of) maximal independent sets of $K$.
\end{center}
Indeed, the minterms $x\in \SAT^{-1}(1)$ correspond to CSPs with a unique satisfying assignment $\phi\colon L\to\Sigma$; there is a single 1-entry in each of the $n$ truth tables (so that $|x|=n$) consistent with $\phi$. Such an $x$, interpreted as a subset of nodes, is independent in~$K$ as it only contains nodes whose labels are consistent with $\phi$. Conversely, because every independent set $x\subseteq V(K)$ can only contain pairwise consistently labeled nodes, $x$ naturally defines a partial assignment $L'\to\Sigma$ for some $L'\subseteq L$. A maximal independent set $x$ corresponds to picking a node from each of the $n$ constraint cliques consistent with some total assignment $\phi\colon L\to\Sigma$. Hence $x$ is a $1$-input to $\SAT$ with unique satisfying assignment $\phi$.

Our goal is now to exhibit a set of valid inequalities for the independent set polytope $P_K$ whose associated slack matrix embeds the \witness-game for $\SAT$. Let $x\subseteq V(K)$ be an independent set and $y\in\SAT^{-1}(0)$. We claim that the following inequalities (indexed by $y$) are valid:
\begin{equation} \label{eq:minus-one}
|x\cap y|~=~\sum_{i\,:\,y_i=1}x_i~\leq~n-1.
\end{equation}
Clearly \eqref{eq:minus-one} holds whenever $|x|\leq n-1$. Since it is impossible to have $|x|\geq n+1$, assume that $x$ is maximal:~$|x|=n$. As argued above, $x$ is a minterm of $\SAT$. Hence $(x,y)$ is a valid pair of inputs to the $\KWp$-game, and so they admit a witness: $|x\cap\bar{y}|\geq 1$. Therefore $|x\cap y| = n - |x\cap \bar{y}|\leq n-1$. This shows that \eqref{eq:minus-one} is valid. The slack matrix associated with inequalities \eqref{eq:minus-one} has entries
\[
n-1-|x\cap y|~=~|x\cap \bar{y}|-1,
\]
for any minterm $x$ and any $y\in\SAT^{-1}(0)$. But this is just the \witness-game for $\SAT$ with Alice's input restricted to minterms.

\subsection[Proof of \autoref*{thm:main}]{Proof of \autoref{thm:main}}

Here we simply string the above reductions together. By \autoref{cor:expander} there is a constant-size~$g$ and a bounded-degree $G$ with $n$ edges such that the \witness-game for $\TSE_G\circ g^n$ requires $\Omega(n/\log n)$ bits of communication. By the reduction of \autoref{sec:tse-to-sat} this implies an $\Omega(n/\log n)$ lower bound for the \witness-game associated with (the $\KWp$-game for) a monotone function $\SAT\colon\{0,1\}^{O(n)}\to\{0,1\}$. As discussed in \autoref{sec:minterm}, the complexity of the \witness-game for $\SAT$ is affected only by $\pm \log n$ when restricted to minterms. Thus the minterm-restricted \witness-game for $\SAT$ still has complexity $\Omega(n/\log n)$. (Alternatively, one can note that the reduction from Tseitin to CSP-SAT produced only minterms.) Hence the nonnegative rank of the matrix for that game is $2^{\Omega(n/\log n)}$. By the reduction of \autoref{sec:sat-to-is} there is a bounded-degree $O(n)$-node graph $K$ and a system of valid inequalities~\eqref{eq:minus-one} for the independent set polytope $P_K$ such that the slack matrix $M(P_K;Q)$, where $Q$ is the polyhedron with facets determined by~\eqref{eq:minus-one}, embeds the matrix for the minterm-restricted \witness-game for $\SAT$. Thus $\log\rk^+(M(P_K;Q))\geq \Omega(n/\log n)$. By \autoref{fact:xc-rk} we have $\log\xc(P_K)=\log\rk^+(M(P_K))\geq\log\bigl(\rk^+(M(P_K;Q))-1\bigr)\geq\Omega(n/\log n)$.

\section{Our Gadget} \label{sec:gadget}

We define our two-party gadget $g\colon\{0,1\}^3\times\{0,1\}^3\to\{0,1\}$ as follows; see \autoref{fig:gadget}:
\[
g(x,y)~\coloneqq~ x_1 + y_1 + x_2y_2 + x_3y_3\pmod{2}.
\]
We note that the smaller gadget $x_1+y_1+x_2y_2\pmod{2}$ was considered in~\cite{sherstov11pattern,goos14communication}.

\subsection{Flips and windows} \label{sec:windows}
The most basic property of $g$ is that it admits \emph{Alice/Bob-flips}:
\begin{enumerate}[label=(\arabic*),itemsep=2mm]
\item \emph{Alice-flips:} There is a row permutation $\pi_\A\colon\calX\to\calX$ that flips the output of the gadget: $g(\pi_\A(x),y)=\neg g(x,y)$ for all $x,y$. Namely, Alice just flips the value of $x_1$.
\item \emph{Bob-flips:} There is a column permutation $\pi_\B\colon\calY\to\calY$ that flips the output of the gadget: $g(x,\pi_\B(y))=\neg g(x,y)$ for all $x,y$. Namely, Bob just flips the value of $y_1$.
\end{enumerate}

\begin{figure}
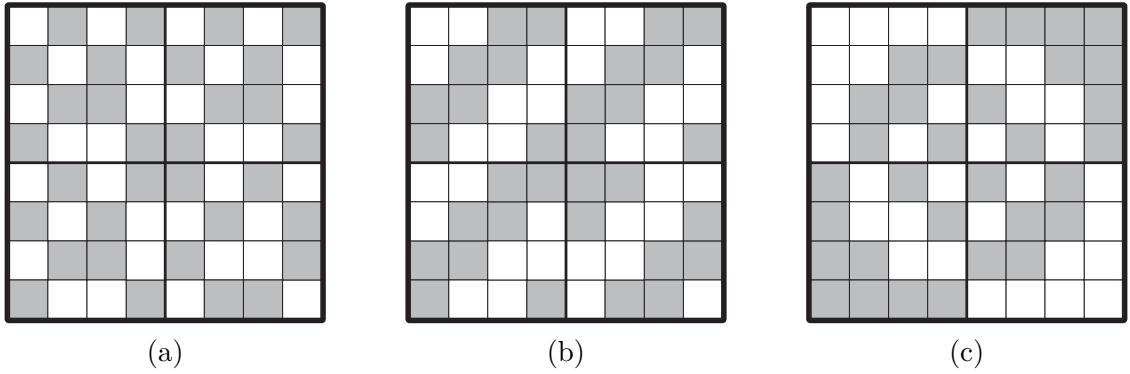
%
\centering
\begin{lpic}[t(-5mm)]{gadget1(.26)}
\lbl[c]{100,2;(a)}
\end{lpic}
\begin{lpic}[t(-5mm)]{gadget2(.26)}
\lbl[c]{100,2;(b)}
\end{lpic}
\begin{lpic}[t(-5mm)]{gadget3(.26)}
\lbl[c]{100,2;(c)}
\end{lpic}\\[2mm]
\caption{Three ways to view our gadget $g\colon\calX\times\calY\to\{0,1\}$ by permuting rows and columns. The white and gray cells represent $0$- and $1$-inputs, respectively.}
\label{fig:gadget}
\vspace{2mm}
\end{figure}

A more interesting feature of our gadget (which $x_1+y_1+x_2y_2$ does not possess) is that $g$ embeds---in an especially uniform manner---certain $2\times 4$ and $4\times 2$ submatrices which we call ``stretched $\AND$'' and ``stretched $\NAND$''. For terminology, we define a \emph{$z$-window} where $z\in\{0,1\}$ as a $z$-monochromatic rectangle of size $2$ in the domain of $g$, i.e., an all-$z$ submatrix of either horizontal shape $1\times 2$ or vertical shape $2\times 1$. Here is an illustration of \emph{horizontally} stretched $\AND/\NAND$, which are composed of four horizontally shaped windows (for \emph{vertical} stretch, the illustration should be transposed):
\begin{center}
\begin{lpic}[b(3mm)]{stretched(.2)}
\lbl[c]{60,0;$\AND$}
\lbl[c]{220,0;stretched $\AND$}
\lbl[c]{460,0;$\NAND$}
\lbl[c]{620,0;stretched $\NAND$}

\small
\lbl[c]{10,40;$1$}
\lbl[c]{10,80;$0$}
\lbl[c]{40,111;$0$}
\lbl[c]{80,111;$1$}

\lbl[c]{410,40;$1$}
\lbl[c]{410,80;$0$}
\lbl[c]{440,111;$0$}
\lbl[c]{480,111;$1$}

\end{lpic}
\end{center}
The key property is that each $z$-window $w$ is embedded as the stretched $(1,1)$-input to a \emph{unique} embedding of stretched $\AND$ (if $z=1$) or $\NAND$ (if $z=0$) inside $g$. That is, for each $w$ we can find the following unique submatrix (illustrated again for horizontal shapes), where we denote by $w^\leftarrow$, $w^\nwarrow$, and $w^\uparrow$ the $(1-z)$-windows corresponding to the stretched $(1,0)$-, $(0,0)$-, and $(0,1)$-inputs to the stretched $\AND/\NAND$.
\begin{center}
\begin{lpic}[b(3mm)]{flips(.2)}

\lbl[c]{100,0; if $w$ is a $1$-window}
\lbl[c]{380,0; if $w$ is a $0$-window}
\large
\lbl[c]{66,83;$w^\nwarrow$}
\lbl[c]{66,42;$w^\leftarrow$}
\lbl[c]{144,83;$w^\uparrow$}
\lbl[c]{140,40;$w$}

\lbl[c]{346,83;$w^\nwarrow$}
\lbl[c]{346,42;$w^\leftarrow$}
\lbl[c]{424,83;$w^\uparrow$}
\lbl[c]{420,40;$w$}

\end{lpic}
\end{center}
This defines three maps (``\emph{directed flips}'') $w\mapsto w^\leftarrow$, $w\mapsto w^\nwarrow$, $w\mapsto w^\uparrow$, which turn out to be shape-maintaining \emph{bijections} between the set of $z$-windows and the set of $(1-z)$-windows. In particular, if $w$ is a uniformly random $z$-window of~$g$, then each of $w^\leftarrow$, $w^\nwarrow$, $w^\uparrow$ is a uniformly random $(1-z)$-window.

\subsection{Checking the existence of flips}

The properties of $g$ claimed above can be verified by directly inspecting the gadget (by hand). Luckily, this task can be eased by exploiting symmetries.
\begin{enumerate}[label=(\arabic*),itemsep=2mm,resume]
\item \emph{Transitive symmetry}: The gadget admits a group of symmetries (permutations of its rows and columns leaving $g$ invariant) which splits the domain of~$g$ into two orbits, $g^{-1}(1)$ and $g^{-1}(0)$. Specifically, there is a group $\calS\subseteq \mathfrak{S}_8\times\mathfrak{S}_8$ (here $\mathfrak{S}_8$ is the symmetric group on $8$ elements) such that when $(\pi_1,\pi_2)\in \calS$ acts on $g$, the output remains invariant: $g(\pi_1(x),\pi_2(y))=g(x,y)$ for all $x,y$; and moreover, $\calS$ is transitive in the sense that for any two $1$-inputs $(x,y),(x',y')\in g^{-1}(1)$ (or $0$-inputs) there is a symmetry $(\pi_1,\pi_2)\in \calS$ such that $(\pi_1(x),\pi_2(y))=(x',y')$.
\end{enumerate}
To see that $g$ really does have property (3), we visualize $g$ as constructed from $\XOR(x_1,x_2)\coloneqq x_1+x_2\pmod{2}$ by applying the following~``$\leadsto$'' transformation twice:
\begin{center}
\begin{lpic}[b(-2mm),t(0mm)]{construction(.2)}
\lbl[c]{60,60;\LARGE$M$}

\lbl[c]{180,40;$M$}
\lbl[c]{180,80;$M$}
\lbl[c]{220,40;$\neg M$}
\lbl[c]{220,80;$M$}

\small
\lbl[c]{400,115;$x_1+y_1$}
\lbl[c]{540,128;$x_1+y_1$}
\lbl[c]{540,112;$\mbox{}+x_2y_2$}
\lbl[c]{680,128;$x_1+y_1$}
\lbl[c]{680,112;$\mbox{}+x_2y_2+x_3y_3$}

\huge
\lbl[c]{130,58;$\leadsto$}
\lbl[c]{470,58;$\leadsto$}
\lbl[c]{610,58;$\leadsto$}
\end{lpic}
\end{center}
It is easy to see that $\XOR$ has the properties (1)--(3). We argue that if $M$ is a boolean matrix with the properties (1)--(3) and $M\leadsto M'$, then $M'$ has the properties (1)--(3). Suppose the entries of $M$ are indexed by $(x,y)$; we use $(xa,yb)$ to index the entries of $M'$ where $a,b\in\{0,1\}$ are bits. If $\pi_\A$, $\pi_\B$ are the Alice/Bob-flips for~$M$, then Alice/Bob-flips for $M'$ are
\begin{align*}
xa~&\mapsto~\pi_\A(x)a,\\
yb~&\mapsto~\pi_\B(y)b.
\intertext{Suppose $\calS$ is the transitive symmetry group for $M$. Then the transitive symmetry group for $M'$ is generated by the following symmetries (here $\pi_\A^0(x)\coloneqq x$ and $\pi_\A^1(x)\coloneqq\pi_\A(x)$ and similarly for $\pi_\B^b$):}
\forall (\pi_1,\pi_2)\in\calS:\quad (xa,yb)~&\mapsto~(\pi_1(x)a,\pi_2(y)b),\hspace{1.8cm}\\
(xa,yb)~&\mapsto~(\pi_\A^a(x)a,y(1-b)),\\
(xa,yb)~&\mapsto~(x(1-a),\pi_\B^b(y)b).
\end{align*}
The first family of symmetries makes each quadrant of $M'$ transitive, whereas the last two symmetries map entries between quadrants. In the second-to-last symmetry, Bob swaps the left and right halves while Alice applies her flip to the bottom half. In the last symmetry, Alice swaps the top and bottom halves while Bob applies his flip to the right half. This shows that $g$ satisfies (1)--(3).

Rather than checking that each $z$-window $w$ appears as the stretched $(1,1)$-input to a unique embedding of stretched $\AND/\NAND$ and that the directed flips are bijections, it is equivalent to check that for all $\ell\in\{(0,0),(0,1),(1,0),(1,1)\}$ each $w$ appears as the stretched $\ell$-input to a unique embedding of stretched $\AND/\NAND$ in $g$. Let us check this assuming $w$ is a $0$-window of shape $1\times 2$ (the other possibilities can be checked similarly). By transitive symmetry, we may assume that $w$ is picked among the four 0's of the first row of \autoref{fig:gadget}(c) (so $\binom{4}{2}$ choices for $w$). The key observation is that the four columns corresponding to these 0's define a submatrix of $g$ (left half of (c)) that contains each even Hamming weight row once, and that the other four columns (right half of (c)) also contain each even Hamming weight row once. We consider the four cases for $\ell$.
\begin{itemize}[leftmargin=2cm]
\item[$\ell=(0,0)$:] To see that $w$ is the stretched $(0,0)$-input to a unique embedding of stretched $\AND$, find the unique other row that has 0's in the same columns as $w$. The other two columns in the left half of (c) have 0's in the top row and 1's in the other row.
\item[$\ell=(0,1)$:] To see that $w$ is the stretched $(0,1)$-input to a unique embedding of stretched $\AND$, find the unique other row that has 1's in the same columns as $w$ and 0's in the other two columns of the left half of (c). These other two columns have 0's in the top row.
\item[$\ell=(1,0)$:] To see that $w$ is the stretched $(1,0)$-input to a unique embedding of stretched $\AND$, find the unique other row that has 0's in the same columns as $w$, then find the unique pair of columns in the right half of (c) that has 0's in that other row. This pair of columns has 1's in the first row.
\item[$\ell=(1,1)$:] To see that $w$ is the stretched $(1,1)$-input to a unique embedding of stretched $\NAND$, find the unique other row that has 1's in the same columns as $w$ and 0's in the other two columns of the left half of (c), then find the unique pair of columns in the right half of (c) that has 1's in that other row. This pair of columns has 1's in the first row.
\end{itemize}

\section{Communication Lower Bound} \label{sec:communication-lb}

In this section we prove \autoref{thm:communication-lb}, where $g$ is the gadget from \autoref{sec:gadget}.

\subsection{High-level intuition}

The high-level reason for why the \witness-game for Tseitin (or really for any sufficiently unstructured search problem) is hard is the same as for the \witness-game for matching~\cite{rothvos14matching}: A correct protocol~$\Pi$ dare not accept its input before it has found at least two witnesses, lest it risk accepting with positive probability an input with a unique witness (which would contradict correctness). However, in an input with $i$ witnesses, there are $\binom{i}{2}$ pairs of witnesses for the protocol to find. Hence one expects the acceptance probability of $\Pi$ (that communicates too few bits and never errs when $i=1$) to grow at least \emph{quadratically} with $i$ rather than linearly as required by the \witness-game.

Formalizing this quadratic increase in acceptance probability for protocols takes some technical work given the current tools available in communication complexity. However, the quadratic increase phenomenon for Tseitin is easier to formalize in the query complexity setting, which we do in \autoref{sec:query-lb}. The reader may want to have a look at that simpler proof first, even though the query proof is somewhat incomparable to our approach for protocols (which revolves around $k$-routability).

\subsection{Preliminaries} \label{sec:preliminaries}

\paragraph{Probability and information theory.}
We use standard notions from information theory: $\H(X)$ is Shannon entropy; $\H(X\mid Y)\coloneqq \E_{y\sim Y}\H(X\mid Y=y)$ is conditional entropy; $\I(X\semi Y)\coloneqq\H(X)-\H(X\mid Y)=\H(Y)-\H(Y\mid X)$ is mutual information; $\Delta(X,Y)$ is statistical (total variation) distance. We use upper-case letters for random variables and corresponding lower-case letters for particular outcomes. Throughout the whole proof, all random choices are assumed to be uniform in their respective domains unless otherwise stated.

\paragraph{Inputs and transcripts.}
Let $XY$ be random inputs to a private-coin protocol~$\Pi$. We denote by $\Pi=\Pi(X,Y)$ the transcript of the protocol on input $XY$, and we let $|\Pi|$ be the maximum length of a transcript (i.e., the communication cost of $\Pi$). Note that the transcript $\Pi$ depends on both $XY$ and the private coins of the players. We let $\Pia \coloneqq (\Pi\mid \Pi\text{ accepts})$ denote the transcript conditioned on the protocol accepting. For each input $z\in\Z_2^n$ to the query problem $\TSE_G$ we can associate in a natural way a pair of random inputs $XY$ to the communication problem $\TSE_G\circ g^n$ that are \emph{consistent with~$z$} in the sense that $g^n(X,Y)=z$; namely, we let $XY$ be uniformly distributed on
\[
(g^n)^{-1}(z)~=~g^{-1}(z_1)\times\cdots\times g^{-1}(z_n).
\]
We write $\Pi|z$ as a shorthand for $\Pi(X,Y)$ where $XY$ are drawn at random from the above set.

\paragraph{Windows.}
As is often the case with information complexity arguments, we need to introduce a conditioning variable $W$ whose purpose is to make $X$ and $Y$ conditionally independent. To this end, we employ windows (\autoref{sec:windows}): we call a rectangle $w\coloneqq w_1\times\cdots\times w_n\subseteq (g^n)^{-1}(z)$ a (multi-gadget) \emph{window of~$z$} iff each $w_i$ is a $z_i$-window in $g$ (so $w_i\subseteq g^{-1}(z_i)$). Now, to generate $XY$ as above, we first pick $W$ uniformly at random among all the windows of $z$, and then, conditioned on an outcome $W=w$, we pick $XY\in w$ uniformly at random. In conclusion, $XY$ is uniform on $(g^n)^{-1}(z)$ (since each row and column of $g$ is balanced) and $X$ and $Y$ are conditionally independent given $W$. We write $\Pi|w\coloneqq (\Pi(X,Y)\mid W=w)$ for short.

\paragraph{Alice-flips.}
Let $(x,y)$ be an input consistent with $z\coloneqq g^n(x,y)$ and let $B\subseteq [n]$ be any subset of coordinates of $z$. ($B$ stands for ``block'' by analogy with the concept of block sensitivity from query complexity.) We denote by $(x^B,y)$ the input obtained from $(x,y)$ by letting Alice flip the outputs of all gadgets corresponding to coordinates in $B$, i.e., for every $i\in B$ Alice replaces her input $x_i$ with $\pi_\A(x_i)$ where $\pi_\A$ is the row permutation from~\autoref{sec:windows}. Hence $(x^B,y)$ is an input consistent with $z^B$. We can also have Alice flip whole windows: $w^B\coloneqq\{(x^B,y):(x,y)\in w\}$. We henceforth refer to such Alice-flips as just ``flips''. (We could equally well have Bob be the flipper throughout the whole proof, but we needed to make an arbitrary choice between the players.)

\paragraph{Smooth protocols.}
Recall that if $z$ is an input to $\TSE_G$ and $B\subseteq E(G)$ is an eulerian graph, then $z$ and $z^B$ have the same set of violations. Consequently, any protocol $\Pi$ for the \witness-game must accept inputs $(x,y)$ and $(x^B,y)$ with the same probability. We note that we may assume w.l.o.g.\ that the transcript distribution of $\Pi$ is not sensitive to flipping eulerian graphs: if $w$ is a window and $B$ an eulerian graph, then $\Pi|w$ and $\Pi|w^B$ have the same distribution. Indeed, if $\Pi$ does not satisfy this, then we may replace it by a new ``smoothed'' protocol $\Pi'$ that computes as follows on input $(x,y)$: Alice uses her private coins to choose a uniformly random eulerian graph $B$ and then the players run $\Pi$ on input $(x^B,y)$. The fact that we may assume $\Pi$ is smooth is critically used later in the proof.

\subsection{Proof outline}

Let us assume for the sake of contradiction that $\Pi$ is a private-coin protocol of cost~$|\Pi|\leq o(k)$ that accepts each input $(x,y)$ with probability $\alpha\cdot(|\!\viol(z)|-1)$ where $\alpha>0$ is a constant (independent of $(x,y)$) and $z\coloneqq g^n(x,y)$. We call an input $z$ (and any $(x,y)$ consistent with $z$) an \emph{$i$-violation input} if $|\!\viol(z)|=i$ and all violations occur at the terminals $T$. We analyze the behavior of $\Pi$ on $i$-violation inputs with $i \in\{1,3,7\}$ and show a contradiction via the following implication:
\begin{enumerate}[label=(\boldmath$*$)]
\item \label{implication} \itshape
If protocol $\Pi$ accepts all 1-violation (resp.\ 3-violation) inputs with probability $0$ (resp.~$2\alpha$),\\
then $\Pi$ must mess up by accepting some 7-violation input with probability $> 6\alpha$.
\end{enumerate}
Henceforth, we use $o(1)$ to denote anonymous quantities that tend to $0$ as $|\Pi|/k$ tends to $0$.

The implication \ref{implication} can be derived cleanly from two types of limitations of our too-good-to-be-true $\Pi$. The first limitation concerns the situation where we start with a $1$-violation input $z$, and consider $3$-violation inputs $z^{B_1}$ and $z^{B_2}$ that are obtained from $z$ by flipping either a typical path $B_1$ or another typical path $B_2$ that is edge-disjoint from $B_1$ (the endpoints of $B_i$ are terminals). The protocol should accept both $z^{B_1}$ and $z^{B_2}$ (more precisely, any $(x,y)$ consistent with them) with probability~$2\alpha$, but it better not accept both inputs while generating the same transcript---otherwise we could cut-and-paste $z^{B_1}$ and $z^{B_2}$ together and fool $\Pi$ into accepting $z$ (which would contradict correctness). What we actually get is that the accepting transcripts for $z^{B_1}$ and $z^{B_2}$ should be near-disjoint:
\begin{one-vs-three} \label{thm:1v3}
Let $z$ be any 1-violation input and let $\calP$ be any pairing of the non-violated terminals with canonical edge-disjoint paths $B_1,\ldots,B_k$. Let $w$ be a random window of~$z$, and choose distinct $i,j\in[k]$ at random. Then, with probability $\geq 1-o(1)$,
\[
\Delta\bigl(\Pia|w^{B_i},\Pia|w^{B_j}\bigr)~\geq~1-o(1).
\]
\end{one-vs-three}

The second limitation concerns the situation where we start with a 3-violation input $z$ and flip a typical path $B$ to obtain a 5-violation input $z^B$. Consider a typical accepting transcript $\tau$ in $\Pi|z$. It is unlikely that the execution $\tau$ catches us making the tiny local change $z\mapsto z^B$ in the input, and one expects that $\tau$ continues to appear in $\Pi|z^B$. (This is the usual \emph{corruption} property of large rectangles.) Formally, for windows $w_1$ and $w_2$, we say
\begin{equation}
\Pi|w_1\enspace \text{\itshape overflows onto}\enspace~\Pi|w_2
\qquad\text{iff}\qquad
\textstyle\sum_\tau\,\max(p^1_\tau-p^2_\tau,0)~\le~o(\alpha),
\end{equation}
where\footnote{Note that the event in $\Pr[\Pi|w_i=\tau]$ is to be parsed as ``a sample from the distribution $(\Pi|w_i)$ yields $\tau$''.} $p^i_\tau\coloneqq\Pr[\Pi|w_i=\tau]$ and the sum is over accepting transcripts $\tau$. (The definition of overflow makes sense for any distributions over transcripts; we will also apply it to $\Pi|z$.) For technical reasons (which will become apparent shortly), we shall flip two paths instead of one in order to pass from 3-violation inputs to 7-violation inputs.

\begin{three-vs-seven} \label{thm:3v7}
Let $z$ be any 3-violation input and let $\calP$ be any pairing of the non-violated terminals with canonical edge-disjoint paths $B_1,\ldots,B_{k-1}$. Let $w$ be a random window of~$z$, and choose distinct $i,j\in[k-1]$ at random. Then, with probability $\geq 1-o(1)$,
\[
\Pi|w \enspace\,\text{\itshape overflows onto}\enspace\,\Pi|w^{B_i\cup B_j}.
\]
\end{three-vs-seven}

\subsection{Deriving the contradiction}

We now prove \ref{implication} by applying the \onethree and the \threeseven in a black-box fashion to find some 7-violation input that $\Pi$ accepts with too high a probability $> 6\alpha$.

\begin{wrapfigure}[5]{r}{4.8cm}
\begin{lpic}[l(4mm),b(10mm),t(-5mm),r(-3mm)]{fano(.3)}
\lbl[c]{80 ,-10;$F=([7],E)$}

\lbl[c]{20 ,10;$1$}
\lbl[c]{80 ,10;$2$}
\lbl[c]{140 ,10;$3$}
\lbl[c]{118 ,77;$4$}
\lbl[c]{42 ,77;$6$}
\lbl[c]{88 ,125;$5$}
\lbl[c]{85 ,42;$7$}
\end{lpic}
\end{wrapfigure}

Define $F\coloneqq ([7],E)$ as the \emph{Fano plane} hypergraph on 7 nodes. See the figure on the right. This hypergraph has 7 hyperedges, each of which is incident to 3 nodes, and the hyperedges are pairwise uniquely intersecting. For each hyperedge $e\in E$ choose some arbitrary but fixed pairing $\calP^e$ of the remaining nodes in $[7]\smallsetminus e$.

\emph{Probability space.}
Choose the following at random:
\begin{enumerate}
\item An injection of $[7]$ into $T$. Denote the result by $v_1,\ldots,v_7\in T$.
\item A pairing $\calP$ of the remaining terminals $T\smallsetminus \{v_1,\ldots,v_7\}$.
\item A 7-violation input $z_7$ with $\viol(z_7)= \{v_1,\ldots,v_7\}$.
\item A window $w_7$ of $z_7$.
\end{enumerate}
We do not make a distinction between the nodes of $F$ and their embedding $\{v_1,\ldots,v_7\}$ in~$T$. In particular, we think of the hyperedges $e\in E$ as triples of terminals, and the $\calP^e$ as pairings of terminals. Associated with the pairing $\calP^e \cup \calP$ there is a canonical collection of edge-disjoint paths; let $\{B_1^e,B_2^e\}$ denote the two paths that connect $\calP^e$ in this collection.

Based on the above, we define seven $3$-violation windows, indexed by $e\in E$:
\[
\text{window}\enspace w_e\coloneqq w_7^{B_1^e\cup B_2^e}\enspace\text{of}\enspace
z_e\coloneqq z_7^{B_1^e\cup B_2^e}
\qquad\text{\itshape (note:$\enspace\,\viol(z_e)=e$)}.
\]
The following claim (proved at the end of this subsection) follows directly from the \onethree and the \threeseven as soon as we view our probability space from the right perspective.
\begin{claim} \label{clm:list}
In the following list of 28 events, each occurs with probability $\geq 1-o(1)$:
\begin{itemize}
\item {\slshape Overflow for $e\in E$:}~~$\Pi|w_e$ overflows onto $\Pi|w_7$.
\item {\slshape Near-disjointness for $\{e,e'\}\subseteq E$:}~~$\Delta\bigl(\Pia|w_e,\Pia|w_{e'}\bigr)\geq 1-o(1)$.
\end{itemize}
\end{claim}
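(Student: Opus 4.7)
The plan is to verify each of the 28 events separately with probability $\geq 1 - o(1)$, and then conclude by a union bound. The seven overflow events will follow from the \threeseven and the twenty-one near-disjointness events from the \onethree. The only real technical content is to re-parametrize the probability space of the claim so that the ``random pair of canonical paths'' demanded by each Lemma corresponds, after averaging, to the specific pair of paths appearing in the event.

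For the overflow event indexed by $e \in E$, I would apply the \threeseven with base $3$-violation input $z_e$ and pairing $\calP^e \cup \calP$ of $T \smallsetminus e$, letting $(B_1^e, B_2^e)$ play the role of the Lemma's random pair of paths. The Lemma's guarantee is over a uniformly random pair of canonical paths out of the $\binom{k-1}{2}$ choices. Although $(B_1^e, B_2^e)$ is deterministic once the embedding of $[7]$ into $T$ and $\calP$ are fixed, it becomes uniformly random after averaging: the embedding of $[7] \smallsetminus e$ is a uniformly random injection into $T \smallsetminus e$ and $\calP^e$ is a fixed pairing of $[7] \smallsetminus e$, so the induced pair of two disjoint pairs of $T \smallsetminus e$ is uniformly random; appending the uniform completion $\calP$ then yields a uniformly random pairing of $T \smallsetminus e$ together with a uniformly random choice of two of its pairs. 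The \threeseven bound transfers to our specific event.

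For the near-disjointness event indexed by $\{e, e'\} \subseteq E$, write $v \coloneqq e \cap e'$, $\{a,b\} \coloneqq e \smallsetminus e'$, $\{c,d\} \coloneqq e' \smallsetminus e$, and $\{p,q\} \coloneqq [7] \smallsetminus (e \cup e')$. The plan is to introduce a common $1$-violation ancestor $w_v$ with violation only at $v$, obtained from $w_7$ by flipping three canonical paths $B_{ab}^*, B_{cd}^*, B_{pq}^*$ of the auxiliary pairing $\calP_v$ that pairs $(a,b), (c,d), (p,q)$ and completes by $\calP$. Since $w_v^{B_{ab}^*}$ and $w_e$ have the same violation set $e$, they differ by the eulerian edge set $(B_1^e \cup B_2^e) \triangle (B_{cd}^* \cup B_{pq}^*)$, so by the smoothness assumption on $\Pi$ they are identically distributed; similarly $w_v^{B_{cd}^*}$ and $w_{e'}$. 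It therefore suffices to show near-disjointness of $\Pia|w_v^{B_{ab}^*}$ and $\Pia|w_v^{B_{cd}^*}$, which follows from the \onethree applied to $z_v$ and $\calP_v$, together with the same symmetry argument as above (now with three structured pairs in $\calP_v$ from which we single out two) showing that $(B_{ab}^*, B_{cd}^*)$ is a uniformly random pair of canonical paths of $\calP_v$ after averaging.

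The main obstacle is carrying out the probability-space re-parametrization in both parts. One has to verify the combinatorial identity that a uniformly random injection $[7] \hookrightarrow T$ together with a uniformly random completion $\calP$ induces the right joint distribution on (induced pairing, structured pairs) so that the Lemmas' ``uniformly random pair of paths'' conclusion applies pointwise and, upon averaging, yields the desired $1 - o(1)$ bound.
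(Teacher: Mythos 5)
Your proposal is correct and matches the paper's proof in essentially all respects: both parts are handled by re-parametrizing the probability space so that the specific flipped paths become the uniformly random pair demanded by the corresponding lemma, with the near-disjointness case routed through a common 1-violation ancestor (your $w_v$, the paper's $w_1$) together with the smoothness of $\Pi$ under eulerian flips. The only cosmetic difference is that the paper generates the space bottom-up from the 1-violation (resp.\ 3-violation) window so that, e.g., $w_e=w_1^{B'_i}$ holds exactly and smoothness is invoked only on the $e'$ side, whereas you invoke it symmetrically on both sides.
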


By a union bound over all the 28 events in the above list, we can fix our random choices 1--4 to obtain a fixed 7-violation window $w_7$ and fixed 3-violation windows $w_e$ such that
\begin{alignat}{3}
\text{\slshape Overflow:}\quad
&\forall e\in E:
&\quad\textstyle \sum_\tau \max(p^e_\tau-p^7_\tau,0)~&\leq~o(\alpha),
\label{eq:overflow}\\
\text{\slshape Near-disjointness:}\quad
&\forall \{e,e'\}\subseteq E:
&\textstyle \sum_\tau \min(p^e_\tau,p^{e'}_\tau)~&\leq~o(\alpha).
\label{eq:almostdisjoint}
\end{alignat}
Here $p^7_\tau\coloneqq\Pr[\Pi|w_7=\tau]$, $p^e_\tau\coloneqq\Pr[\Pi|w_e=\tau]$, and the sums are over accepting transcripts; we have also rephrased the near-disjointness property using the fact that $\Pr[\Pi|w_e~\text{accepts}]=2\alpha$.

These two properties state that typical accepting transcripts for $\Pi|w_e$ contribute to the acceptance probability of $\Pi|w_7$, and these contributions are pairwise near-disjoint. Hence, roughly speaking, one expects $\Pr[\Pi|w_7~\text{accepts}]$ to be at least $\sum_{e\in E} \Pr[\Pi|w_e~\text{accepts}] = 7\cdot 2\alpha = 14\alpha >6\alpha$. But then some 7-violation input in $w_7$ would be accepted with probability $>6\alpha$, which completes the proof of \ref{implication} (and hence \autoref{thm:communication-lb}). Indeed, we perform this calculation carefully as follows. We first partition the set of accepting transcripts as $\bigcup_{e\in E} S_e$ where $S_e$ consists of those $\tau$'s for which $p^e_\tau=\max_{e'}p^{e'}_\tau$ (breaking ties arbitrarily). Then{\allowdisplaybreaks
\begin{align*}
\Pr[\Pi|w_7\text{ accepts}]~&\textstyle=~\sum_\tau p^7_\tau\\
&\textstyle\ge~\sum_{e\in E,\,\tau\in S_e}\,\min(p^7_\tau,p^e_\tau)\\
&\textstyle=~\sum_{e\in E,\,\tau\in S_e}\,\bigl(p^e_\tau-\max(p^e_\tau-p^7_\tau,0)\bigr)\\
&\textstyle\ge~\sum_{e\in E,\,\tau\in S_e}\,p^e_\tau-\sum_{e\in E,\,\tau}\,\max(p^e_\tau-p^7_\tau,0)\\
&\textstyle\ge~\sum_{e\in E,\,\tau\in S_e}\,p^e_\tau-7\cdot o(\alpha) \tag{via \eqref{eq:overflow}}\\
&\textstyle=~\sum_{e\in E,\,\tau}\,p^e_\tau-\sum_{e\in E,\,e'\in E\smallsetminus\{e\},\,\tau\in S_{e'}}\,p^e_\tau-o(\alpha)\\
&\textstyle=~\sum_{e\in E,\,\tau}\,p^e_\tau-\sum_{e\in E,\,e'\in E\smallsetminus\{e\},\,\tau\in S_{e'}}\,\min(p^e_\tau,p^{e'}_\tau)-o(\alpha)\\
&\textstyle\ge~\sum_{e\in E,\,\tau}\,p^e_\tau-\sum_{e\in E,\,e'\in E\smallsetminus\{e\},\,\tau}\,\min(p^e_\tau,p^{e'}_\tau)-o(\alpha)\\
&\textstyle\ge~\sum_{e\in E,\,\tau}\,p^e_\tau-7\cdot 6\cdot o(\alpha)-o(\alpha) \tag{via \eqref{eq:almostdisjoint}}\\
&\textstyle=~\sum_{e\in E}\,\Pr[\Pi|w_e\text{ accepts}]-o(\alpha) \\
&\textstyle=~7\cdot 2\alpha-o(\alpha)\\
&\textstyle=~(14-o(1))\cdot\alpha\\
&\textstyle>~6\alpha.
\end{align*}
}

\begin{proof}[Proof of \autoref{clm:list}]

\emph{Overflow.}
For notational convenience, suppose $e=\{v_1,v_2,v_3\}$ and $\calP^e=\{\{v_4,v_7\},\{v_5,v_6\}\}$. An alternative way to generate a sample from our probability space is (in steps 1 and 6, we are really picking random injections):
\begin{enumerate}
\item Random $\{v_1,v_2,v_3\}\subseteq T$.
\item Random 3-violation input $z_e$ subject to $\viol(z_e) = \{v_1,v_2,v_3\}$.
\item Random pairing $\calP'=\{P_1,\ldots,P_{k-1}\}$ of $T\smallsetminus \{v_1,v_2,v_3\}$ with canonical paths $B_1,\ldots,B_{k-1}$.
\item Random window $w_e$ of $z_e$.
\item Random distinct $i,j\in[k-1]$.
\item Random $\{v_4,v_7\} = P_i$ and $\{v_5,v_6\}= P_j$.
\item Deterministically, define $z_7\coloneqq z_e^{B_i\cup B_j}$ and $w_7 \coloneqq w_e^{B_i\cup B_j}$ and $\calP\coloneqq \calP'\smallsetminus \{P_i,P_j\}$.
\end{enumerate}
The choices made in steps 1--3 match the data that is quantified universally in the \threeseven, whereas steps 4 and 5 make random choices as in the \threeseven; hence the lemma applies.

\begin{figure}%
\centering
\begin{tikzpicture}[auto,scale=1,%
  myarrow/.style={%
		line width=.5mm,
    decorate,
    decoration={%
      snake,
      segment length=4mm,
      amplitude=0.6mm,
      pre length=0pt,
      post length=0pt,
    }
  }]
\tikzstyle{n} = [font=\large,inner sep=2,circle,draw,line width=0.2mm];
\tikzstyle{m} = [midway,black];

\begin{scope}[rotate=90,xscale=-.9]
\node[n] (v3)  at (0:3) {$v_3$};
\node[n] (v4)  at (-51:3) {$v_4$};
\node[n] (v5)  at (-103:3) {$v_5$};
\node[n] (v6)  at (-154:3) {$v_6$};
\node[n] (v7)  at (-206:3) {$v_7$};
\node[n] (v1)  at (-257:3) {$v_1$};
\node[n] (v2)  at (-309:3) {$v_2$};
\end{scope}

\draw[myarrow,YellowOrange] (v1) -- (v2) node [m,left] {$B'_i$};
\draw[myarrow,Fuchsia] (v4) -- (v5) node [m,right] {$B'_j$};

\draw[myarrow,JungleGreen] (v4) -- (v7) node [m,right,xshift=5,yshift=-2] {$B^e_1$};
\draw[myarrow,JungleGreen] (v5) -- (v6) node [m,right,yshift=6] {$B^e_2$};
\draw[myarrow,Bittersweet] (v2) -- (v6) node [m,left,xshift=-6] {$B^{e'}_2$};
\draw[myarrow,Bittersweet] (v1) -- (v7) node [m,left,yshift=7.5] {$B^{e'}_1$};
\end{tikzpicture}%
\hspace{2.5cm}%
\begin{tikzpicture}[auto,scale=.85]
\tikzstyle{n} = [font=\large,inner sep=2];
\tikzstyle{m} = [midway,above,sloped,black];
\tikzstyle{myarrow} = [<->,line width=.4mm];

\node at (-2.4,0) {};
\node[n] (seven)  at (0,3) {$w_7$};
\node[n] (e)      at (-2,0) {$w_e$};
\node[n] (f)      at (2,.5) {$w_{e'}$};
\node[n] (fhat)   at (2,-.5) {$\hat{w}_{e'}$};
\node[n] (one)      at (0,-3) {$w_1$};

\small

\draw[myarrow,JungleGreen] (seven) -- (e) node [m] {$B^e_1\cup B^e_2$};
\draw[myarrow,Bittersweet] (seven) -- (f) node [m] {$B^{e'}_1\cup B^{e'}_2$};
\draw[myarrow,YellowOrange] (e) -- (one) node [m,below] {$B'_i$};
\draw[myarrow,Fuchsia] (fhat) -- (one) node [m,below] {$B'_j$};
\draw[dashed,black] (1.4,0) -- (2.6,0);
\end{tikzpicture}

\caption{Illustration for the proof of \autoref{clm:list}. \emph{Left:} Paths flipped between terminals. \emph{Right:} Relationships between windows.}
\label{fig:flips}
\vspace{-1mm}
\end{figure}
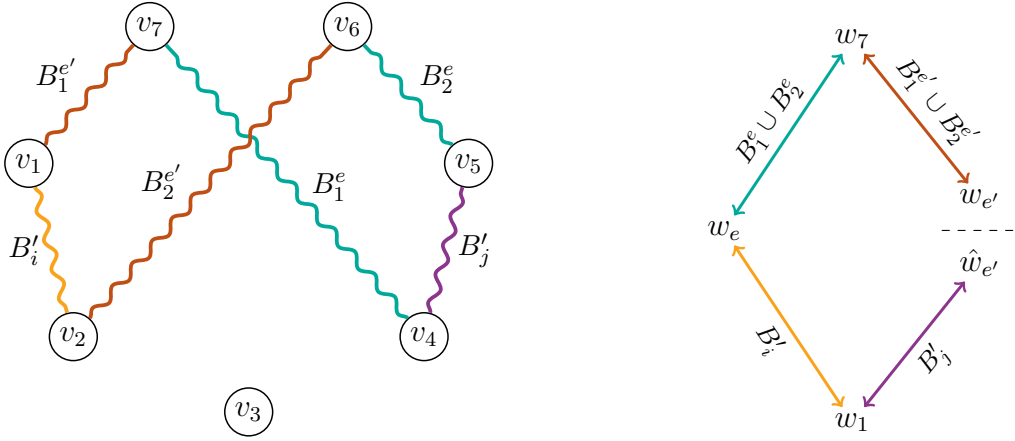

\emph{Near-disjointness.}
For notational convenience, suppose $e=\{v_1,v_2,v_3\}$, $e'=\{v_3,v_4,v_5\}$, $\calP^e=\{\{v_4,v_7\},\{v_5,v_6\}\}$, and $\calP^{e'}=\{\{v_1,v_7\},\{v_2,v_6\}\}$ (it does not matter for the proof how $\calP^e$ and $\calP^{e'}$ were chosen). An alternative way to generate a sample from our probability space is (see \autoref{fig:flips}):
\begin{enumerate}
\item Random $v_3\in T$.
\item Random 1-violation input $z_1$ subject to $\viol(z_1) = \{v_3\}$.
\item Random pairing $\calP'=\{P'_1,\ldots,P'_k\}$ of $T\smallsetminus \{v_3\}$ with canonical paths $B'_1,\ldots,B'_k$.
\item Random window $w_1$ of $z_1$.
\item Random distinct $i,j,l \in[k]$.
\item Random $\{v_1,v_2\} = P'_i$ and $\{v_4,v_5\}= P'_j$ and $\{v_6,v_7\}=P'_l$.\\[-4mm]
\item Deterministically, define
\begin{itemize}[label=$-$]
\item $z_e \coloneqq z_1^{B'_i}$ and $w_e \coloneqq w_1^{B'_i}$,
\item $\hat{z}_{e'} \coloneqq z_1^{B'_j}$ and $\hat{w}_{e'} \coloneqq w_1^{B'_j}$,
\item $\calP\coloneqq\calP'\smallsetminus\{P_i,P_j,P_l\}$,
\item $\{B^e_1,B^e_2\}$ according to the canonical paths for $\calP^e\cup \calP$,
\item $\{B^{e'}_1,B^{e'}_2\}$ according to the canonical paths for $\calP^{e'}\cup \calP$,
\item $z_7 \coloneqq z_e^{B^e_1\cup B^e_2}$ and $w_7 \coloneqq w_e^{B^e_1\cup B^e_2}$,
\item $z_{e'} \coloneqq z_7^{B^{e'}_1\cup B^{e'}_2}$ and $w_{e'} \coloneqq w_7^{B^{e'}_1\cup B^{e'}_2}$.
\end{itemize}
\end{enumerate}
The choices made in steps 1--3 match the data that is quantified universally in the \onethree, whereas steps 4 and 5 (excluding variable $l$) make random choices as in the \onethree. Hence that lemma applies and shows that $\Pia|w_e$ and $\Pia|\hat{w}_{e'}$ are near-disjoint with high probability. Finally, we note that $\hat{w}_{e'}$ and $w_{e'}$ differ by the flipping of an eulerian graph, namely $B'_j\oplus B'_i\oplus B^e_1\oplus B^e_2\oplus B^{e'}_1\oplus B^{e'}_2$ (where $\oplus$ means symmetric difference), so $\Pi|w_{e'}$ and $\Pi|\hat{w}_{e'}$ have the same distribution assuming w.l.o.g.\ that $\Pi$ is smooth (as discussed in \autoref{sec:preliminaries}). Thus $\Pia|w_e$ and $\Pia|w_{e'}$ are also near-disjoint with high probability.
\end{proof}

\subsection{Roadmap for the rest of the proof}

We prove the \onethree in \autoref{sec:technical:1vs3} and the \threeseven in \autoref{sec:technical:3vs7}. Both proofs rely on another technical lemma, the \homog (stated below, proved in \autoref{sec:technical:homogeneity}), which generalizes a lemma from (the full version of)~\cite[\S5]{huynh12virtue}. In fact, we prove the \homog for any gadget $g$ that is \emph{regular} (as defined in \autoref{sec:technical:homogeneity}), which our gadget is.

\begin{homogeneity} \label{thm:homog}
Fix an arbitrary $z\in\{0,1\}^m$ for some $m$. Let $W$ be a random window of $z$ in $g^m$, let $XY$ be a random input in $W$, and let $R$ be an arbitrary random variable that is conditionally independent of $W$ given $XY$. If $\I(R\semi XY\mid W)\le o(1)$ then at least a $1-o(1)$ fraction of windows $w$ of $z$ are such that $\Delta(R|w,R|z)\le o(1)$.
\end{homogeneity}

In the statement, $R|w$ is shorthand for $R|(W=w)$, and $R|z$ denotes the marginal distribution of $R$ in the whole probability space, which is over uniformly random $XY\in(g^m)^{-1}(z)$. Furthermore, we mention that our proof shows that at least a $1-o(1)$ fraction of $xy\in(g^m)^{-1}(z)$ are such that $\Delta(R|xy,R|z)\le o(1)$, but for the \onethree and the \threeseven we only require the property for windows.

In \autoref{sec:gadget} we defined the directed flips $w^\leftarrow,w^\nwarrow,w^\uparrow$ for a single-gadget window. We now also define directed flips for multi-gadget windows $w$: if $B$ is a subset of coordinates then $w^{\leftarrow B},w^{\nwarrow B},w^{\uparrow B}$ are defined by applying the corresponding directed flips to the coordinates in $B$. Then we have the following key property of our gadget.

\begin{fact} \label{fact:dirflip}
If $w$ is a uniformly random window of $z$, then each of $w^{\leftarrow B},w^{\nwarrow B},w^{\uparrow B}$ is marginally a uniformly random window of $z^B$.
\end{fact}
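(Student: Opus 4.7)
This is a routine product-structure consequence of the single-gadget bijection property established in \autoref{sec:gadget}, so the plan is simply to reduce coordinate-wise.

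First, I would unpack the definitions. By the definition of a window of $z \in \{0,1\}^m$ in $g^m$, the set of windows of $z$ is precisely the product
\[
\mathcal{W}(z) ~=~ \prod_{i=1}^m \mathcal{W}_i(z_i),
\]
where $\mathcal{W}_i(z_i)$ denotes the (finite) set of $z_i$-windows of the single gadget $g$. Hence a uniformly random window $W$ of $z$ is obtained by picking, independently across coordinates, a uniformly random $z_i$-window $W_i \in \mathcal{W}_i(z_i)$. Likewise, the directed-flip operations on multi-gadget windows are defined coordinate-wise: for $\bullet \in \{\leftarrow,\nwarrow,\uparrow\}$,
\[
(W^{\bullet B})_i ~=~ \begin{cases} W_i^{\bullet} & \text{if } i \in B, \\ W_i & \text{if } i \notin B. \end{cases}
\]

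Second, I would invoke the single-gadget fact proved in \autoref{sec:gadget}: each of the three directed flips $w_i \mapsto w_i^{\leftarrow}$, $w_i \mapsto w_i^{\nwarrow}$, $w_i \mapsto w_i^{\uparrow}$ is a bijection from $\mathcal{W}_i(z_i)$ to $\mathcal{W}_i(1 - z_i)$. Any bijection pushes the uniform distribution on its domain to the uniform distribution on its codomain, so for each $i \in B$, the random variable $W_i^{\bullet}$ is uniformly distributed on $\mathcal{W}_i(1 - z_i) = \mathcal{W}_i((z^B)_i)$. For $i \notin B$, $W_i$ remains uniform on $\mathcal{W}_i(z_i) = \mathcal{W}_i((z^B)_i)$.

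Third, since the coordinates of $W$ are independent and the flip map acts coordinatewise (so independence is preserved), the distribution of $W^{\bullet B}$ is the product over $i$ of the uniform distributions on $\mathcal{W}_i((z^B)_i)$, which is exactly the uniform distribution on $\mathcal{W}(z^B)$. This is the claim. There is no real obstacle here: the only nontrivial input is the single-gadget bijection property, which has already been verified; everything else is bookkeeping of a product distribution under a coordinate-wise bijection.
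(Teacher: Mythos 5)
Your proposal is correct and is exactly the argument the paper intends: the fact is stated without an explicit proof precisely because it follows immediately from the product structure of multi-gadget windows together with the single-gadget bijection property established in \autoref{sec:gadget} (that each directed flip is a shape-maintaining bijection between $z$-windows and $(1-z)$-windows). Your coordinate-wise reduction and the observation that a bijection pushes forward the uniform distribution is the same bookkeeping the authors leave implicit.
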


This concept is used in the proofs of the \onethree and the \threeseven. It turns out that the \threeseven can be proved (with a small modification to our proof) even for the simpler gadget that was used in \cite{sherstov11pattern,goos14communication} (as can the \homog since that gadget is regular), but our proof of the \onethree crucially uses \autoref{fact:dirflip}, which does not hold for that simpler gadget.

\subsection[Proof of the \onethreename]{Proof of the \onethree} \label{sec:technical:1vs3}

Consider a probability space with the following random variables: $I\in[k]$, $J\in[k]\smallsetminus\{I\}$, $W$ is a random window of $z^{B_I}$, $XY$ is a random input in $W$, and $\Pia$ is the random transcript of $\Pi$ on input $XY$ conditioned on acceptance. For convenience, denote $B\coloneqq B_1\cup\cdots\cup B_k$ and $B_{-i}\coloneqq B\smallsetminus B_i$. We have \[\I\bigl(\Pia\semi(XY)_{B_{-I}}\bigmid IW\bigr)~\le~\H(\Pia\mid IW)~\le~|\Pi|~\le~o(k)\] so by the standard direct sum property \cite{bar-yossef04information},
\begin{align*}
\I\bigl(\Pia\semi(XY)_{B_J}\bigmid IJW\bigr)~&\textstyle=~\frac{1}{k-1}\cdot\E_{i\sim I}\sum_{j\in[k]\smallsetminus\{i\}}\I\bigl(\Pia\semi(XY)_{B_j}\bigmid W,I=i\bigr)\\
&\textstyle\le~\frac{1}{k-1}\cdot\I\bigl(\Pia\semi(XY)_{B_{-I}}\bigmid IW\bigr)\\
&\le~o(1).
\end{align*}
Define $H\coloneqq\{I,J\}$, and abbreviate $B_I\cup B_J$ as $B_H$ and $W_{[n]\smallsetminus(B_I\cup B_J)}$ as $W_{-B_H}$. By Markov's inequality, with probability $\ge 1-o(1)$ over $h\sim H$ and $w_{-B_h}\sim W_{-B_h}$, we have \[\I\bigl(\Pia\semi(XY)_{B_J}\bigmid IJW_{B_h},H=h,W_{-B_h}=w_{-B_h}\bigr)~\le~o(1).\] Fixing such $h$ and $w_{-B_h}$ (henceforth), say $h=\{1,2\}$, it suffices to show that with probability $\ge 1-o(1)$ over a random window $w_{B_h}$ of $z_{B_h}$, we have $\Delta\bigl(\Pia|w^{B_1},\Pia|w^{B_2}\bigr)\ge 1-o(1)$ (where $w$ is the combination of $w_{B_h}$ and $w_{-B_h}$).

We rephrase the situation as follows. Consider a protocol $\Pi_*$ that interprets its input as $(xy)_{B_h}$, uses private coins to sample random $(xy)_{-B_h}$ from $w_{-B_h}$, and runs $\Pi$ on the input $xy$ (the combination of $(xy)_{B_h}$ and $(xy)_{-B_h}$). Henceforth recycling notation by letting $z\in\{0,1\}^{|B_h|}$ refer to $z_{B_h}$, and letting $(I,J)$ be random in $\{(1,2),(2,1)\}$, $W$ be a random window of (the new) $z^{B_I}$, and $XY$ be a random input to $\Pi_*$ in $W$, the situation is:
\begin{itemize}[leftmargin=3.25cm]
\item[\bf Assumption:~] $\I\bigl(\Pia_*\semi(XY)_{B_J}\bigmid IJW\bigr)\le o(1)$.
\item[\bf Want to show:~] For $\ge 1-o(1)$ fraction of windows $w$ of $z$, $\Delta\bigl(\Pia_*|w^{B_1},\Pia_*|w^{B_2}\bigr)\ge 1-o(1)$.
\end{itemize}

The assumption holds (with factor $2$ loss in the $o(1)$) conditioned on either outcome of $(I,J)$; let us tacitly condition on the outcome $(1,2)$. Then $\I\bigl(\Pia_*\semi(XY)_{B_2}\bigmid W\bigr)\le o(1)$ where $W$ is a random window of $z^{B_1}$. By Markov's inequality, with probability $\ge 1-o(1)$ over $w_{B_1}\sim W_{B_1}$ we have $\I\bigl(\Pia_*\semi(XY)_{B_2}\bigmid W_{B_2},W_{B_1}=w_{B_1}\bigr)\le o(1)$; call such a $w_{B_1}$ \emph{good}. Hence for a good $w_{B_1}$, we can apply the \homog with $m\coloneqq|B_2|$ and $R\coloneqq\Pia_*|(W_{B_1}=w_{B_1})$ (note that $R|(xy)_{B_2}$ is the distribution of $\Pia_*$ on input $(XY)_{B_1}(xy)_{B_2}$ where $(XY)_{B_1}$ is random in $w_{B_1}$). This tells us that for a good $w_{B_1}$, with probability $\ge 1-o(1)$ over $w_{B_2}\sim W_{B_2}$ we have $\Delta\bigl(\Pia_*|w_{B_1}w_{B_2},\Pia_*|w_{B_1}z_{B_2}\bigr)\le o(1)$, where the distribution $\Pia_*|w_{B_1}z_{B_2}$ is over random $(XY)_{B_1}\in w_{B_1}$ and $(XY)_{B_2}\in (g^m)^{-1}(z_{B_2})$. We summarize the above with the following claim.

\begin{claim} \label{clm:window1}
For $\ge 1-o(1)$ fraction of windows $w$ of $z^{B_1}$, we have $\Delta\bigl(\Pia_*|w,\Pia_*|w_{B_1}z_{B_2}\bigr)\le o(1)$.
\end{claim}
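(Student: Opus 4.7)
The plan is to derive the claim directly from the \homog after a single Markov-type step that peels $B_1$ off the window. Write $w=w_{B_1}\cup w_{B_2}$, where $w_{B_1}$ is a window for the (flipped) $B_1$-coordinates of $z^{B_1}$ and $w_{B_2}$ is a window of $z_{B_2}$.

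First, I would use the assumption $\I(\Pia_*\semi(XY)_{B_2}\mid W)\le o(1)$ with Markov's inequality to conclude that at least a $1-o(1)$ fraction of $w_{B_1}\sim W_{B_1}$ are \emph{good}, meaning $\I(\Pia_*\semi(XY)_{B_2}\mid W_{B_2},W_{B_1}=w_{B_1})\le o(1)$. For a fixed good $w_{B_1}$, I would set $R\coloneqq \Pia_*|(W_{B_1}=w_{B_1})$ and apply the \homog with $m\coloneqq|B_2|$, base string $z_{B_2}$, window variable $W_{B_2}$, input $(XY)_{B_2}$, and this $R$. The mutual-information hypothesis is exactly goodness of $w_{B_1}$. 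The conditional-independence hypothesis — that $R$ is conditionally independent of $W_{B_2}$ given $(XY)_{B_2}$ — holds because, after fixing $(XY)_{B_2}$ and $W_{B_1}=w_{B_1}$, the value of $R$ depends only on the uniform draw of $(XY)_{B_1}$ from $w_{B_1}$, the private coins of $\Pi_*$, and the acceptance event, and none of these involve $W_{B_2}$.

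Unpacking the conclusion of the \homog gives $R|w_{B_2}=\Pia_*|w$ and $R|z_{B_2}=\Pia_*|w_{B_1}z_{B_2}$, so the lemma yields that for each good $w_{B_1}$, at least a $1-o(1)$ fraction of windows $w_{B_2}$ of $z_{B_2}$ satisfy $\Delta(\Pia_*|w,\Pia_*|w_{B_1}z_{B_2})\le o(1)$. A union bound over the bad $w_{B_1}$ and, within each good $w_{B_1}$, the bad $w_{B_2}$, then yields the stated $1-o(1)$ fraction of windows $w$ of $z^{B_1}$. The only step that requires care is verifying the conditional-independence hypothesis of the \homog against the residual probability space after fixing $w_{B_1}$; once that is in place, the argument is essentially bookkeeping.
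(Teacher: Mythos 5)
Your proposal matches the paper's proof essentially step for step: the same Markov argument to isolate good $w_{B_1}$, the same choice of $R\coloneqq\Pia_*|(W_{B_1}=w_{B_1})$ in the \homog with $m=|B_2|$, and the same final union bound (the paper merely adds the preliminary remark that conditioning on the outcome $(I,J)=(1,2)$ costs only a factor $2$ in the $o(1)$). Your explicit verification of the conditional-independence hypothesis is a point the paper leaves implicit, but it is the correct justification.
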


Conditioning on the other outcome $(I,J)=(2,1)$ yields the symmetric property.

\begin{claim} \label{clm:window2}
For $\ge 1-o(1)$ fraction of windows $w$ of $z^{B_2}$, we have $\Delta\bigl(\Pia_*|w,\Pia_*|z_{B_1}w_{B_2}\bigr)\le o(1)$.
\end{claim}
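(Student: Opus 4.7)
The plan is to carry out the exact mirror image of the proof of \autoref{clm:window1}, with the roles of $B_1$ and $B_2$ swapped. The key observation enabling this is that the direct-sum bound $\I\bigl(\Pia_*\semi(XY)_{B_J}\bigmid IJW\bigr)\le o(1)$ was established \emph{before} any conditioning on $(I,J)$, so it remains valid (with a factor-$2$ loss absorbed into the $o(1)$) conditioned on either outcome of $(I,J)$. Having used $(I,J)=(1,2)$ to obtain \autoref{clm:window1}, I would now tacitly condition on $(I,J)=(2,1)$; this makes $W$ a random window of $z^{B_2}$ and gives $\I\bigl(\Pia_*\semi(XY)_{B_1}\bigmid W\bigr)\le o(1)$.

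From here the argument proceeds verbatim. Markov's inequality yields that for $\ge 1-o(1)$ fraction of $w_{B_2}\sim W_{B_2}$, one has $\I\bigl(\Pia_*\semi(XY)_{B_1}\bigmid W_{B_1},W_{B_2}=w_{B_2}\bigr)\le o(1)$; call such $w_{B_2}$ \emph{good}. For each good $w_{B_2}$, apply the \homog with $m\coloneqq |B_1|$ and $R\coloneqq \Pia_*|(W_{B_2}=w_{B_2})$. The required conditional independence of $R$ and $W_{B_1}$ given $(XY)_{B_1}$ is immediate: once $(XY)_{B_1}$ is fixed, $R$ is a function of an independent uniform sample $(XY)_{B_2}\in w_{B_2}$ together with $\Pi_*$'s private coins, neither of which depends on $W_{B_1}$. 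The lemma then concludes that for $\ge 1-o(1)$ fraction of $w_{B_1}\sim W_{B_1}$, $\Delta\bigl(\Pia_*|w_{B_1}w_{B_2},\Pia_*|z_{B_1}w_{B_2}\bigr)\le o(1)$. A union bound over the two ``good'' events produces \autoref{clm:window2}.

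I expect no genuine obstacle here: no step in the proof of \autoref{clm:window1} uses a property that distinguishes $B_1$ from $B_2$ (neither the pairing, the gadget, the $k$-routability of $G$, nor the smoothing of $\Pi$ treats the two blocks asymmetrically), so the symmetric statement follows by relabeling. The only thing worth spelling out is the clean separation of $(I,J)$-conditioning noted in the first paragraph, which is precisely what lets a single instance of the information-theoretic bookkeeping discharge both claims.
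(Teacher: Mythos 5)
Your proposal is correct and matches the paper exactly: the paper's entire proof of this claim is the single sentence ``Conditioning on the other outcome $(I,J)=(2,1)$ yields the symmetric property,'' and your write-up simply spells out that symmetry (including the factor-$2$ loss from conditioning on $(I,J)$, which the paper had already noted when establishing \autoref{clm:window1}). No gap.
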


Now pick a random window $w$ of $z^{B_h}$. Using \autoref{fact:dirflip}, $w^{B_2}$ and $w^{\nwarrow B_2}$ are both uniformly random (albeit correlated) windows of $z^{B_1}$, and $w^{B_1}$ and $w^{\nwarrow B_1}$ are both uniformly random (albeit correlated) windows of $z^{B_2}$. Hence by \autoref{clm:window1}, \autoref{clm:window2}, and a union bound, with probability $\ge 1-o(1)$ over the choice of $w$, the following four distances are simultaneously $\le o(1)$: $\Delta\bigl(\Pia_*|w^{B_2},\Pia_*|w_{B_1}z_{B_2}\bigr)$, $\Delta\bigl(\Pia_*|w^{\nwarrow B_2},\Pia_*|w_{B_1}z_{B_2}\bigr)$, $\Delta\bigl(\Pia_*|w^{B_1},\Pia_*|z_{B_1}w_{B_2}\bigr)$, $\Delta\bigl(\Pia_*|w^{\nwarrow B_1},\Pia_*|z_{B_1}w_{B_2}\bigr)$.

We argue shortly that $\Delta\bigl(\Pia_*|w^{\nwarrow B_1},\Pia_*|w^{\nwarrow B_2}\bigr)=1$ with probability $1$; putting everything together then shows that $\Delta\bigl(\Pia_*|w^{B_1},\Pia_*|w^{B_2}\bigr)\ge 1-o(1)$, as illustrated below. (This is equivalent to what we want to show, since sampling a window $w$ of $z^{B_h}$ and taking $w^{B_1},w^{B_2}$ is equivalent to sampling a window $w$ of $z$ and taking $w^{B_2},w^{B_1}$.)

\begin{figure}[H]
\centering
\begin{tikzpicture}
\tikzset{outer sep=4}
\node (a) at (0,2.5) {$\Pia_*|w^{B_1}$};
\node (b) at (5,2.5) {$\Pia_*|z_{B_1}w_{B_2}$};
\node (c) at (10,2.5) {$\Pia_*|w^{\nwarrow B_1}$};
\node (d) at (0,0) {$\Pia_*|w^{B_2}$};
\node (e) at (5,0) {$\Pia_*|w_{B_1}z_{B_2}$};
\node (f) at (10,0) {$\Pia_*|w^{\nwarrow B_2}$};
\path[<->,semithick]
(a) edge node[above] {\footnotesize $\Delta\le o(1)$} (b)
(b) edge node[above] {\footnotesize $\Delta\le o(1)$} (c)
(d) edge node[below] {\footnotesize $\Delta\le o(1)$} (e)
(e) edge node[below] {\footnotesize $\Delta\le o(1)$} (f)
(c) edge node[right] {\footnotesize $\Delta=1$} (f)
(a) edge[dashed] node[left] {\footnotesize $\Delta\ge 1-o(1)$} (d);
\end{tikzpicture}
\end{figure}

To finish the proof, suppose for contradiction that some accepting transcript has positive probability under both $\Pia_*|xy$ and $\Pia_*|x'y'$ for some $xy\in w^{\nwarrow B_1}$ and $x'y'\in w^{\nwarrow B_2}$. Then $\Pi_*$ would also accept $xy'$ with positive probability. We claim that $g^{|B_h|}(xy')=z$. To see this, consider any coordinate $c$ of $z$; suppose $c\in B_1$ (the case $c\in B_2$ is similar). There is an embedding of stretched $\AND$ (if $z_c=0$) or $\NAND$ (if $z_c=1$) such that $w^{\nwarrow B_1}_c$ is the image of $(0,0)$ (hence is $z_c$-monochromatic) and $w^{\nwarrow B_2}_c=w_c$ is the image of $(1,1)$ (hence is $(1-z_c)$-monochromatic). Since $(xy)_c\in w^{\nwarrow B_1}_c$ and $(x'y')_c\in w_c$, it follows that $(xy')_c$ is in the image of $(0,1)$, which is $z_c$-monochromatic. So $g((xy')_c)=z_c$ and the claim is proved.

Since $\Pi_*$ accepts some input in $(g^{|B_h|})^{-1}(z)$ with positive probability (for the new $z$), it follows that $\Pi$ accepts some input in $(g^n)^{-1}(z)$ with positive probability, for the original $z$, which is a contradiction since the original $z$ has only one violation.

\subsection[Proof of the \threesevenname]{Proof of the \threeseven} \label{sec:technical:3vs7}

Assume for convenience that $k-1$ is even. Note that sampling distinct $i,j\in[k-1]$ is equivalent to sampling a permutation $\sigma$ of $[k-1]$ and an $h\in[\frac{k-1}{2}]$ and setting $i=\sigma(2h-1)$, $j=\sigma(2h)$.

Thus we have a probability space with random variables $\Sigma,H,I,J$ corresponding to the above, as well as the following: $W$ is a random window of $z$, $XY$ is a random input in $W$, and $\Pia$ is the random transcript of $\Pi$ on input $XY$ conditioned on acceptance. For convenience, denote $B\coloneqq B_1\cup\cdots\cup B_{k-1}$ and $B_{ij}\coloneqq B_i\cup B_j$. We have \[\I\bigl(\Pia\semi(XY)_B\bigmid W\bigr)~\le~\H(\Pia\mid W)~\le~|\Pi|~\le~o(k)\] so by the standard direct sum property \cite{bar-yossef04information},{\allowdisplaybreaks
\begin{align*}
\I\bigl(\Pia\semi(XY)_{B_{IJ}}\bigmid WIJ\bigr)~&=~\I\bigl(\Pia\semi(XY)_{B_{IJ}}\bigmid W\Sigma H\bigr)\\
&\textstyle=~\frac{2}{k-1}\cdot\sum_{h\in[(k-1)/2]}\I\bigl(\Pia\semi(XY)_{B_{IJ}}\bigmid W\Sigma,H=h\bigr)\\
&\textstyle\le~\frac{2}{k-1}\cdot\I\bigl(\Pia\semi(XY)_B\bigmid W\Sigma\bigr)\\
&\textstyle=~\frac{2}{k-1}\cdot\I\bigl(\Pia\semi(XY)_B\bigmid W\bigr)\\
&\le~o(1).
\end{align*}
}Abbreviate $W_{[n]\smallsetminus B_{ij}}$ as $W_{-B_{ij}}$. By Markov's inequality, with probability $\ge 1-o(1)$ over $ij\sim IJ$ and $w_{-B_{ij}}\sim W_{-B_{ij}}$, we have $\I\bigl(\Pia\semi(XY)_{B_{ij}}\bigmid W_{B_{ij}},W_{-B_{ij}}=w_{-B_{ij}}\bigr)\le o(1)$. Fixing such $ij$ and $w_{-B_{ij}}$ (henceforth), it suffices to show that with probability $\ge 1-o(1)$ over $w_{B_{ij}}\sim W_{B_{ij}}$, $\Pi|w$ overflows onto $\Pi|w^{B_{ij}}$ (where $w$ is the combination of $w_{B_{ij}}$ and $w_{-B_{ij}}$).

We rephrase the situation as follows. Consider a protocol $\Pi_*$ that interprets its input as $(xy)_{B_{ij}}$, uses private coins to sample random $(xy)_{-B_{ij}}$ from $w_{-B_{ij}}$, and runs $\Pi$ on the input $xy$ (the combination of $(xy)_{B_{ij}}$ and $(xy)_{-B_{ij}}$). Henceforth recycling notation by letting $z\in\{0,1\}^{|B_{ij}|}$ refer to $z_{B_{ij}}$, letting $B$ refer to $B_{ij}$, and letting $W$ be a random window of (the new) $z$ and $XY$ be a random input to $\Pi_*$ in $W$, the situation is:
\begin{itemize}[leftmargin=3.25cm]
\item[\bf Assumption:~] $\I\bigl(\Pia_*\semi XY\bigmid W\bigr)\le o(1)$.
\item[\bf Want to show:~] For $\ge 1-o(1)$ fraction of windows $w$ of $z$, $\Pi_*|w$ overflows onto $\Pi_*|w^B$.
\end{itemize}

\begin{claim} \label{clm:overflow}
For $\ge 1-o(1)$ fraction of windows $w$ of $z^B$, $\Pi_*|z$ overflows onto $\Pi_*|w$.
\end{claim}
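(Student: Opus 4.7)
The plan is to reduce the overflow statement to a TV bound on conditional-on-acceptance distributions and then apply the \homog on both the $z$ and $z^B$ sides.

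Since $\Pr[\Pi_*|z\text{ accepts}]=2\alpha\le 6\alpha=\Pr[\Pi_*|w\text{ accepts}]$ for every window $w$ of $z^B$, the bound $\Delta(\Pia_*|z,\Pia_*|w)\le o(1)$ implies the required overflow: writing $q^z_\tau\coloneqq\Pia_*|z(\tau)$ and $q^w_\tau\coloneqq\Pia_*|w(\tau)$, one has
\[
\textstyle\sum_\tau\max(p^z_\tau-p^w_\tau,0)=\sum_\tau\max(2\alpha q^z_\tau-6\alpha q^w_\tau,0)\le 2\alpha\sum_\tau\max(q^z_\tau-q^w_\tau,0)\le 2\alpha\cdot\Delta(\Pia_*|z,\Pia_*|w)=o(\alpha).
\]
So it suffices to show $\Delta(\Pia_*|w,\Pia_*|z)\le o(1)$ for $\ge 1-o(1)$ fraction of windows $w$ of $z^B$.

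Applying the \homog with $R\coloneqq\Pia_*$ to the hypothesis $\I(\Pia_*;XY\mid W)\le o(1)$ yields $\Delta(\Pia_*|w,\Pia_*|z)\le o(1)$ for most windows $w$ of $z$. To transfer to windows of $z^B$, I would use the Alice-flip bijection $w\leftrightarrow w^B$ (measure-preserving between uniform windows of $z$ and of $z^B$) together with a second application of the \homog on the $z^B$ side. This second application would be set up by rerunning the direct-sum/Markov preamble of this subsection with the Alice-flipped protocol $\tilde\Pi(x,y)\coloneqq\Pi(x^B,y)$ (of the same cost as $\Pi$) in place of $\Pi$ before recycling; this yields $\Delta(\Pia_*|w',\Pia_*|z^B)\le o(1)$ for most windows $w'$ of $z^B$. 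By triangle inequality, the claim would then reduce to the bridging bound $\Delta(\Pia_*|z,\Pia_*|z^B)\le o(1)$.

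The main obstacles will be the $z^B$-side \homog derivation and the bridging bound. For the former, the difficulty is that the Alice-flip set $B=B_{ij}$ was fixed only after the direct-sum/Markov step, so one must instead work with the random flip $B_{IJ}$ at the direct-sum stage---I expect this to be handled by bounding the entropy of a joint variable across pairings, or by invoking the smoothness of $\Pi$ to partially eliminate the dependence. For the bridging bound, the rectangle representation $\Pr[\Pi_*(x,y)=\tau]=a_\tau(x)b_\tau(y)$ together with the Alice-flip bijection gives
\[
p^{z^B}_\tau=\E_{(X,Y)\in(g^m)^{-1}(z)}[a_\tau(X^B)b_\tau(Y)]\quad\text{and}\quad p^z_\tau=\E_{(X,Y)\in(g^m)^{-1}(z)}[a_\tau(X)b_\tau(Y)],
\]
so the TV bound amounts to comparing these expectations over inputs consistent with $z$. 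I expect a careful averaging argument, combined with the per-input closeness conclusion of the $z$-side Homogeneity, to yield the bound.
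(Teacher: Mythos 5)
Your opening reduction is sound as far as it goes: since $\Pr[\Pi_*|z\text{ accepts}]=2\alpha\le 6\alpha=\Pr[\Pi_*|w\text{ accepts}]$, the two-sided bound $\Delta\bigl(\Pia_*|z,\Pia_*|w\bigr)\le o(1)$ would indeed imply overflow. But that intermediate statement is strictly stronger than overflow---it is two-sided where overflow is deliberately one-sided---and it is not establishable. For a window $w$ of the $7$-violation input $z^B$, the distribution $\Pia_*|w$ may place up to $4\alpha$ of unconditioned mass on accepting transcripts that $\Pi_*|z$ essentially never produces (the caption of the paper's overflow figure makes exactly this point: ``$\Pia_*|w$ may contain additional transcripts\ldots since the acceptance probability is higher''), in which case $\Delta\bigl(\Pia_*|z,\Pia_*|w\bigr)$ is a constant bounded away from $0$. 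In the intended picture the accepting mass of a $7$-violation window is assembled from several near-disjoint $3$-violation sources, so conditional-on-acceptance closeness to any single one of them should not be expected. Concretely, the unsupported step in your chain is the ``bridging bound'' $\Delta\bigl(\Pia_*|z,\Pia_*|z^B\bigr)\le o(1)$: neither the hypothesis $\I\bigl(\Pia_*\semi XY\bigmid W\bigr)\le o(1)$ nor the \homog (which only compares a window of $z$ to $z$ itself, or a window of $z^B$ to $z^B$ itself) relates the two sides of the flip, and your sketch via the rectangle representation $a_\tau(x)b_\tau(y)$ gives no reason why $\E\bigl[a_\tau(X^B)b_\tau(Y)\bigr]$ should be close to $\E\bigl[a_\tau(X)b_\tau(Y)\bigr]$.

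The mechanism the paper uses to cross from $z$ to $z^B$---and the one your proposal is missing---is the stretched $\AND/\NAND$ structure of the gadget. For a window $w$ of $z^B$, the three directed flips $w^{\leftarrow B}$, $w^{\nwarrow B}$, $w^{\uparrow B}$ are windows of $z$, each marginally uniform by \autoref{fact:dirflip}, so the \homog applies to all three on the $z$ side alone; no second run of the direct-sum/Markov preamble with a flipped protocol is needed. The rectangle property of transcripts then gives the multiplicative identity $p^{00}_\tau\cdot p^{11}_\tau=p^{01}_\tau\cdot p^{10}_\tau$ among the four windows $w^{\nwarrow B},w^{\uparrow B},w^{\leftarrow B},w$, and since the three flipped windows all have transcript distributions within $o(1)$ of $\Pi_*|z$, this forces the termwise lower bound $p^{11}_\tau\ge p_\tau-\gamma^{00}_\tau-\gamma^{01}_\tau-\gamma^{10}_\tau$, which upon summing is exactly the one-sided overflow statement and nothing more. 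Without some such cross-flip identity, the gap between $\Pi_*|z$ and $\Pi_*|w$ cannot be bridged, and your proposed route does not close it.
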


We prove \autoref{clm:overflow} shortly, but first we finish the proof of the \threeseven assuming it. By the \homog (with $m\coloneqq |B|$ and $R\coloneqq\Pia_*$), \autoref{clm:overflow}, and a union bound, at least a $1-o(1)$ fraction of windows $w$ of $z$ are such that both $\Delta\bigl(\Pia_*|w,\Pia_*|z\bigr)\le o(1)$ and $\Pi_*|z$ overflows onto $\Pi_*|w^B$ (since $w^B$ is a uniform window of $z^B$ if $w$ is a uniform window of $z$). We show that this implies that $\Pi_*|w$ overflows onto $\Pi_*|w^B$ as follows (letting $p^z_\tau$, $p^w_\tau$, $p^{w^B}_\tau$ denote the probability of a transcript $\tau$ under the distributions $\Pi_*|z$, $\Pi_*|w$, $\Pi_*|w^B$ respectively, and summing only over accepting $\tau$'s): \[\textstyle\sum_\tau\max(p^w_\tau-p^{w^B}_\tau,0)~\le~\sum_\tau\max(p^z_\tau-p^{w^B}_\tau,0)+\sum_\tau|p^w_\tau-p^z_\tau|~\le~o(\alpha)+o(\alpha)~=~o(\alpha).\]

\begin{proof}[Proof of \autoref{clm:overflow}]
By \autoref{fact:dirflip}, if $w$ is a random window of $z^B$, then $w^{\leftarrow B}$, $w^{\nwarrow B}$, $w^{\uparrow B}$ are each marginally uniformly random windows of $z$. Thus by the \homog (with $m\coloneqq |B|$ and $R\coloneqq\Pia_*$) and a union bound, with probability $\ge 1-o(1)$ over the choice of $w$, the following three distances are simultaneously $\le o(1)$: $\Delta\bigl(\Pia_*|w^{\leftarrow B},\Pia_*|z\bigr)$, $\Delta\bigl(\Pia_*|w^{\nwarrow B},\Pia_*|z\bigr)$, $\Delta\bigl(\Pia_*|w^{\uparrow B},\Pia_*|z\bigr)$. Now assuming this good event occurs for some particular $w$, we just need to show that $\Pi_*|z$ overflows onto $\Pi_*|w$.

(See \autoref{fig:overflow} for a proof-by-picture.) Let $p_\tau$, $p^{11}_\tau$, $p^{10}_\tau$, $p^{00}_\tau$, $p^{01}_\tau$ denote the probabilities of a transcript $\tau$ under $\Pi_*|z$, $\Pi_*|w$, $\Pi_*|w^{\leftarrow B}$, $\Pi_*|w^{\nwarrow B}$, $\Pi_*|w^{\uparrow B}$ respectively. Let $\gamma^{00}_\tau\coloneqq|p_\tau-p^{00}_\tau|$, and for $ab\in\{01,10\}$ let $\gamma^{ab}_\tau\coloneqq|p^{00}_\tau-p^{ab}_\tau|$. We claim that for all $\tau$, $p_\tau-p^{11}_\tau\le\gamma^{00}_\tau+\gamma^{01}_\tau+\gamma^{10}_\tau$; this will finish the proof since then (summing only over accepting $\tau$'s) \[\textstyle\sum_\tau\max(p_\tau-p^{11}_\tau,0)~\le~\sum_\tau(\gamma^{00}_\tau+\gamma^{01}_\tau+\gamma^{10}_\tau)~\le~o(\alpha)+o(\alpha)+o(\alpha)~=~o(\alpha)\] where the second inequality is because $\sum_\tau\gamma^{00}_\tau$, $\sum_\tau\gamma^{01}_\tau$, $\sum_\tau\gamma^{10}_\tau\le o(\alpha)$ follow from (respectively) $\Delta\bigl(\Pia_*|z,\Pia_*|w^{\nwarrow B}\bigr)$, $\Delta\bigl(\Pia_*|w^{\nwarrow B},\Pia_*|w^{\uparrow B}\bigr)$, $\Delta\bigl(\Pia_*|w^{\nwarrow B},\Pia_*|w^{\leftarrow B}\bigr)\le o(1)$.

\begin{figure}[t]
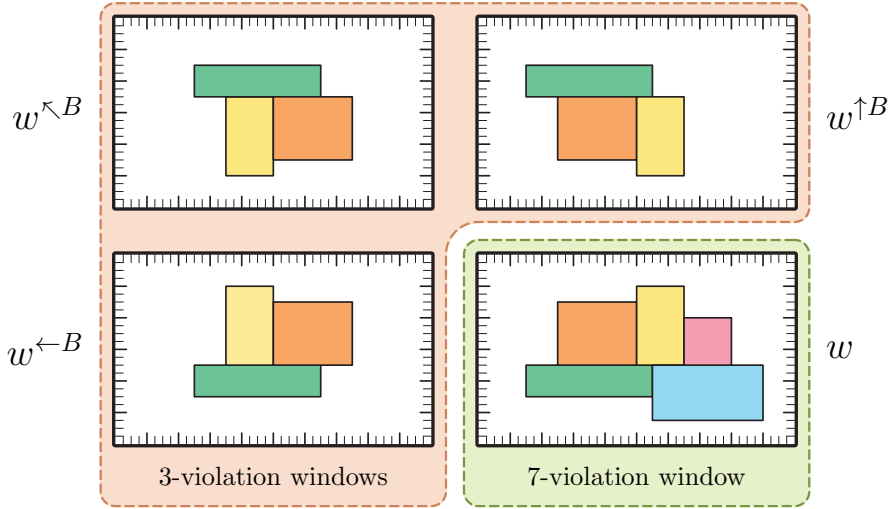
%
\centering
\begin{lpic}[t(-5mm)]{overflow(.21)}
\Large
\lbl[r]{0,110;$w^{\leftarrow B}$}
\lbl[r]{0,260;$w^{\nwarrow B}$}
\lbl[l]{470,110;$w$}
\lbl[l]{470,260;$w^{\uparrow B}$}
\small
\lbl[c]{120,30;3-violation windows}
\lbl[c]{350,30;7-violation window}
\end{lpic}
\caption{
Proof of \autoref{clm:overflow} illustrated. The four windows $w$, $w^{\leftarrow B}$, $w^{\protect\nwarrow B}$, $w^{\uparrow B}$ are rectangles of~$(x,y)$'s. Each $(x,y)$ can be further subdivided according to the private coins $(r_\A,r_\B)$ of the players. The protocol $\Pi_*$ partitions the extended input space of $(xr_\A,yr_\B)$'s into \emph{transcript rectangles}---above, we have only drawn \emph{accepting} transcript rectangles (in various colors). For a window $w'$, the probability $\Pr[\Pi_*|w' = \tau]$ is simply the \emph{area} (appropriately scaled) of the transcript rectangle of $\tau$ inside $w'$. In the proof of \autoref{clm:overflow}, the relevant case is when all of $\Pia_*|w^{\leftarrow B}$, $\Pia_*|w^{\protect\nwarrow B}$, $\Pia_*|w^{\uparrow B}$ have roughly the same distribution, say, $D$ (in fact, $D\coloneqq \Pia_*|z$). By the rectangular property of transcripts, this forces $\Pi_*|z$ to \emph{overflow onto} $\Pi_*|w$. (Note that $\Pia_*|w$ may contain additional transcripts to those in $D$, since the acceptance probability is higher.)}
\label{fig:overflow}
\end{figure}

To verify the subclaim, it suffices to show that
\begin{equation} \label{eq:diff}
p^{01}_\tau\cdot p^{10}_\tau~\ge~(p^{00}_\tau)^2-p^{00}_\tau\gamma^{01}_\tau-p^{00}_\tau\gamma^{10}_\tau
\end{equation}
since by the rectangular nature of transcripts, we have $p^{00}_\tau\cdot p^{11}_\tau=p^{01}_\tau\cdot p^{10}_\tau$, and thus if $p^{00}_\tau>0$ then \[p^{11}_\tau~=~\frac{p^{01}_\tau\cdot p^{10}_\tau}{p^{00}_\tau}~\ge~p^{00}_\tau-\gamma^{01}_\tau-\gamma^{10}_\tau~\ge~p_\tau-\gamma^{00}_\tau-\gamma^{01}_\tau-\gamma^{10}_\tau\] and if $p^{00}_\tau=0$ then of course $p^{11}_\tau\ge p^{00}_\tau=p_\tau-\gamma^{00}_\tau$. To see \eqref{eq:diff}, note that for some signs $\sigma^{01}_\tau,\sigma^{10}_\tau\in\{1,-1\}$, the left side of \eqref{eq:diff} equals $\bigl(p^{00}_\tau+\sigma^{01}_\tau\gamma^{01}_\tau\bigr)\cdot\bigl(p^{00}_\tau+\sigma^{10}_\tau\gamma^{10}_\tau\bigr)$, which expands to
\begin{equation} \label{eq:alt}
(p^{00}_\tau)^2+\sigma^{01}_\tau p^{00}_\tau\gamma^{01}_\tau+\sigma^{10}_\tau p^{00}_\tau\gamma^{10}_\tau+\sigma^{01}_\tau\sigma^{10}_\tau\gamma^{01}_\tau\gamma^{10}_\tau.
\end{equation}
If $\sigma^{01}_\tau=\sigma^{10}_\tau$ then \eqref{eq:alt} is at least the right side of \eqref{eq:diff} since the last term of \eqref{eq:alt} is nonnegative. If $\sigma^{01}_\tau\ne\sigma^{10}_\tau$, say $\sigma^{01}_\tau=-1$ and $\sigma^{10}_\tau=1$, then \eqref{eq:alt} is at least the right side of \eqref{eq:diff} since the sum of the last two terms in \eqref{eq:alt} is $p^{00}_\tau\gamma^{10}_\tau-\gamma^{01}_\tau\gamma^{10}_\tau=p^{01}_\tau\gamma^{10}_\tau\ge 0$.
\end{proof}

\subsection[Proof of the \homogname]{Proof of the \homog} \label{sec:technical:homogeneity}

\begin{definition} \label{def:gadget-graph}
For a gadget $g\colon\calX\times\calY\to\{0,1\}$ and $b\in\{0,1\}$, define the digraph $\calG^b$ as follows: the nodes are the $b$-inputs of $g$, and there is an edge from $xy$ to $x'y'$ iff $x=x'$ or $y=y'$. (That is, each node has a self-loop, and all $b$-inputs in a given row or column have all possible edges between them.)
\end{definition}

\begin{definition} \label{def:regular}
We say a gadget $g\colon\calX\times\calY\to\{0,1\}$ is \emph{regular} iff (i) $|\calX|=|\calY|$ is even, (ii) each row and each column is balanced (half $0$'s and half $1$'s), and (iii) $\calG^0$ and $\calG^1$ are both strongly connected.
\end{definition}

Our gadget $g$ is indeed regular, but we proceed to prove the lemma for any regular $g$.

The first part of the proof is inspired by a similar approach that was used in~\cite{huynh12virtue}. We augment the probability space with the following random variables: let $X'Y'$ be a random input in $W$ that is conditionally independent of $XY$ given $W$, and let $E\in((g^m)^{-1}(z))^2$ be chosen randomly from $\{(XY,X'Y'),(X'Y',XY)\}$. We have $\H(R\mid E)=\H(R\mid WE)\le\H(R\mid W)$ since $R$ is conditionally independent of $W$ given $E$, and conditioning decreases entropy. We also have $\H(R\mid XYE)=\H(R\mid XY)=\H(R\mid XYW)$ since $R$ is conditionally independent of $WE$ given $XY$. Putting these together, we get \[\I(R\semi XY\mid E)~=~\H(R\mid E)-\H(R\mid XYE)~\le~\H(R\mid W)-\H(R\mid XYW)~=~\I(R\semi XY\mid W)~\le~o(1).\] By Markov's inequality, with probability $\ge 1-o(1)$ over $e\sim E$, we have $\I(R\semi XY\mid E=e)\le o(1)$, in which case if $e=(x^{(0)}y^{(0)},x^{(1)}y^{(1)})$ then by Pinsker's inequality\footnote{Specifically, if $RB$ are jointly distributed random variables where $B\in\{0,1\}$ is a uniformly random bit, and $R_b$ denotes the distribution of $R|(B=b)$, then $\I(R\semi B)=\D(R_0\midd R)/2+\D(R_1\midd R)/2\ge 2\cdot(\Delta(R_0,R)^2/2+\Delta(R_1,R)^2/2)\ge 2\cdot(\Delta(R_0,R)/2+\Delta(R_1,R)/2)^2\ge\Delta(R_0,R_1)^2/2$, where $\D$ denotes KL-divergence, and the first inequality is Pinsker's, the second is by convexity of the square function, and the third is by the triangle inequality.}, $\Delta\bigl(R|x^{(0)}y^{(0)},R|x^{(1)}y^{(1)}\bigr)\le o(1)$; let us use $\epsilon>0$ for the latter $o(1)$ quantity. We describe what the above means in graph theoretic terms.

\begin{wrapfigure}[14]{r}{4.5cm}
\begin{lpic}[l(7mm),b(10mm),t(7mm),r(0mm)]{digraph(.35)}
\small
\lbl[c]{55 ,-9;Example of $\calG^1$ for}
\lbl[c]{55 ,-21.5;the regular gadget}
\lbl[c]{55 ,-33;$x_1+y_1+x_2y_2$}
\end{lpic}
\end{wrapfigure}

Define the digraph $\calG^z$ as follows: the nodes are the inputs in $(g^m)^{-1}(z)$, and there is an edge from one input to another iff there exists a window of $z$ containing both inputs; this includes a self-loop at each node. Note that $\calG^z$ is the tensor product $\calG^{z_1}\otimes\cdots\otimes\calG^{z_m}$, i.e., each node of $\calG^z$ corresponds to an $m$-tuple of nodes from those digraphs, and each edge of $\calG^z$ corresponds to an $m$-tuple of edges. For convenience, we make the dependence of the random variable $E$ on $z$ explicit using the notation $E^z$; thus $E^z$ is distributed over the edges of~$\calG^z$. By regularity, for $b\in\{0,1\}$ the distribution of $E^b$ over the edges of $\calG^b$ puts half its mass uniformly over the self-loops, and half its mass uniformly over the non-self-loops. Note that the distribution of $E^z$ is the product of the distributions of $E^{z_1},\ldots,E^{z_m}$, i.e., $E^z$ can be sampled by taking samples $(x^{(0,i)}y^{(0,i)},x^{(1,i)}y^{(1,i)})$ from $E^{z_i}$ (independent over $i\in[m]$) and forming the edge $\bigl(x^{(0,1)}y^{(0,1)}\cdots x^{(0,m)}y^{(0,m)},x^{(1,1)}y^{(1,1)}\cdots x^{(1,m)}y^{(1,m)}\bigr)$ in $\calG^z$.

We say an edge $(x^{(0)}y^{(0)},x^{(1)}y^{(1)})$ of $\calG^z$ is \emph{great} iff $\Delta\bigl(R|x^{(0)}y^{(0)},R|x^{(1)}y^{(1)}\bigr)\le\epsilon$. Thus the great edges have at least $1-o(1)$ probability mass under $E^z$.

Let $L$ be the number of non-self-loop edges in $\calG^b$ (which is the same for $b=0$ and $b=1$).

\begin{claim} \label{clm:walk}
There exists a distribution over length-$2L$ walks on $\calG^z$ such that (i) the first and last nodes are independent and each marginally uniform, and (ii) each of the $2L$ edges on the walk is marginally distributed according to $E^z$.
\end{claim}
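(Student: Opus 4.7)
My plan is first to reduce to the single-gadget case by exploiting the tensor-product structure. If, for each $b\in\{0,1\}$, I have a distribution $\mu_b$ over length-$2L$ walks on $\calG^b$ satisfying (i) and (ii) with respect to $E^b$, then the product measure $\mu_{z_1}\otimes\cdots\otimes\mu_{z_m}$ consists of coordinate-by-coordinate independent walks and thereby defines a single walk on $\calG^z=\calG^{z_1}\otimes\cdots\otimes\calG^{z_m}$ of length $2L$ (using crucially that $L$ is the same for $b=0$ and $b=1$). Its edge marginal at every step factors as $\prod_i E^{z_i}=E^z$, and its endpoints are products of independent uniforms, hence jointly independent and uniform on $V(\calG^z)$. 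So it remains to produce $\mu_b$ for each $b$.

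\textbf{An Eulerian multigraph.} For a single gadget $\calG^b$ I would work with the multigraph $M$ on $V(\calG^b)$ in which each non-self-loop of $\calG^b$ appears once and each self-loop appears with multiplicity $|\calX|-2$. By regularity, every vertex of $\calG^b$ has non-self-loop in- and out-degree $|\calX|-2$, so every vertex of $M$ has in- and out-degree $2(|\calX|-2)$; combined with strong connectedness (part (iii) of regularity), $M$ is Eulerian. Also, $M$ has exactly $L+(|\calX|-2)|\calG^b|=2L$ edges, and the key observation is that the uniform measure on the edges of $M$ (counted with multiplicity) coincides with $E^b$: each non-self-loop gets mass $1/(2L)$, and each self-loop gets mass $(|\calX|-2)/(2L)=1/(2|\calG^b|)$.

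\textbf{Walk construction.} A first attempt is to take a uniformly random Eulerian circuit of $M$ together with a uniformly random starting position; this produces a length-$2L$ walk whose edge marginal is $E^b$ at every step (by cyclic symmetry of random Eulerian circuits of $M$) and whose start is uniform on $V(\calG^b)$---but whose end coincides with the start. To decorrelate the endpoints, I would splice two independent random Eulerian circuits $C_1,C_2$ of $M$ at a uniformly random common vertex: sample $v\in V(\calG^b)$ uniformly, then sample two conditionally independent random Eulerian circuits of $M$, each at a uniformly chosen one of its $2(|\calX|-2)$ visits to $v$, and form the length-$2L$ walk by taking the $L$ edges of $C_1$ ending at its chosen visit of $v$ followed by the $L$ edges of $C_2$ starting at its chosen visit of $v$. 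Edge marginals remain $E^b$ by cyclic symmetry and the independence of $C_1,C_2$, giving (ii); moreover the endpoints are conditionally independent given~$v$.

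\textbf{Main obstacle.} The remaining step is to verify that the joint endpoint law is not only conditionally independent given $v$ but actually \emph{uniform and independent} on $V(\calG^b)\times V(\calG^b)$. Since $v$ itself is uniform, this reduces to showing that for every fixed $v$ the node seen $L$ steps before $v$ on a uniformly random Eulerian circuit of $M$ through $v$ is marginally uniform on $V(\calG^b)$. I expect this to follow from a BEST-theorem-style symmetry of the uniform measure on Eulerian circuits of $M$, exploiting that each vertex is visited exactly $2(|\calX|-2)$ times in every such circuit together with the vertex-regularity of $M$. Making this symmetry precise is the key combinatorial step; if the splicing construction does not directly yield the required uniformity, an alternative would be a direct LP/flow argument showing that the polytope of length-$2L$ walk distributions with edge marginals $E^b$ contains a point whose endpoint marginal is independent uniform.
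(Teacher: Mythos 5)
Your reduction to a single gadget via the tensor-product structure is exactly the paper's first step, and your multigraph $M$ (self-loops taken with multiplicity $|\calX|-2$) together with the observation that the uniform measure on its $2L$ edges coincides with $E^b$ is correct; moreover, since every vertex of $M$ is visited exactly $2(|\calX|-2)$ times in any Eulerian circuit, ``uniform $v$ then uniform visit of $v$'' is the same as ``uniform position in the circuit,'' so property (ii) for your spliced walk does go through. The genuine gap is exactly where you flag it: property (i). Conditioned on the splice vertex $v$, the two endpoints are independent, but to conclude joint independence and uniformity you would need the node $L$ steps before (resp.\ after) a uniformly random visit of $v$ in a uniformly random Eulerian circuit of $M$ to be uniform \emph{conditioned on $v$}. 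Marginal uniformity (averaged over $v$) is not enough: writing the two conditional laws as doubly stochastic matrices $A$ and $B$, the joint endpoint law is $\frac{1}{|V|}AB$, and a product of doubly stochastic matrices need not be the all-$\frac{1}{|V|}$ matrix (e.g.\ $A=B=I$). Vertex-regularity of $M$ and the BEST theorem do not obviously deliver this conditional uniformity, and your LP/flow fallback is not developed, so the claim is not proved as written.

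The paper sidesteps this difficulty entirely with a more hands-on construction: it fixes a single Eulerian tour $v_0,\dots,v_{L-1},v_0$ of the \emph{non-self-loop} edges only (each node visited equally often) and builds the $2L$-step walk from $L$ two-step phases. A uniformly random number $\ell$ of the phases each advance one position along the tour (randomizing whether the self-loop comes before or after the forward edge), and the remaining $L-\ell$ phases make no net progress (self-loop twice, or forward-then-backward). Each step is then, marginally, a fair mixture of a uniform self-loop and a uniform non-self-loop edge, i.e.\ exactly $E^b$, and the final position is $i_0+\ell \bmod L$ with $\ell$ uniform and independent of $i_0$, so endpoint independence holds by construction rather than by an unproven symmetry of random Eulerian circuits. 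If you want to salvage your route, you would need to either prove the conditional-uniformity statement for $M$ or replace the splicing step with some mechanism (like the paper's uniform random displacement $\ell$) that forces independence directly.
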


\begin{proof}
By the product structure of $\calG^z$ and $E^z$, it suffices to prove this claim for a bit $b$ instead of $z$ (as the claim for $z$ follows by sampling $m$ independent such walks on the $\calG^{z_i}$'s and running them ``in parallel''). By regularity, if we ignore the self-loops, there exists an eulerian tour in $\calG^b$ that uses all the non-self-loop edges exactly once, and pays an equal number of visits to each node. Let $v_0,v_1,\ldots,v_{L-1},v_0$ denote the sequence of nodes visited (with repeats) on a fixed such tour. We explicitly describe the distribution of walks $v_{i_0},\ldots,v_{i_{2L}}$ on $\calG^b$, using mod-$L$ arithmetic:
\begin{itemize}
\item[1.] Independently sample $i_0$ and $\ell$ uniformly from $\{0,\ldots,L-1\}$.
\item[2.] For $j=1,\ldots,\ell$, execute one of the following with probability $1/2$ each:
\begin{itemize}[topsep=0pt]
\item[2a.] Use the self-loop then move forward (i.e., $i_{2j-1}=i_{2j-2}$ and $i_{2j}=i_{2j-1}+1$).
\item[2b.] Move forward then use the self-loop (i.e., $i_{2j-1}=i_{2j-2}+1$ and $i_{2j}=i_{2j-1}$).
\end{itemize}
\item[3.] For $j=\ell+1,\ldots,L$, execute one of the following with probability $1/2$ each:
\begin{itemize}[topsep=0pt]
\item[3a.] Use the self-loop twice (i.e., $i_{2j}=i_{2j-1}=i_{2j-2}$).
\item[3b.] Move forward then backward (i.e., $i_{2j-1}=i_{2j-2}+1$ and $i_{2j}=i_{2j-1}-1$).
\end{itemize}
\end{itemize}
This procedure has $L$ phases, each taking $2$ steps of the walk. Each of the first $\ell$ phases has the effect of moving forward one node on the tour, and each of the last $L-\ell$ phases has the effect of ending up at the same node the phase started at. Thus $i_{2L}=i_0+\ell$ and is hence independent of $i_0$ and uniform over $\{0,\ldots,L-1\}$ (since $\ell$ is independent of $i_0$ and uniform); hence also $v_{i_0}$ and $v_{i_{2L}}$ are independent and uniform (since the tour visits each node equally often) and so (i) is verified. Property (ii) holds even conditioned on any $\ell$, and can be verified by a little case analysis; e.g., if $\ell>1$ then the first edge is $(v_{i_0},v_{i_0})$ with probability $1/2$, and is $(v_{i_0},v_{i_0+1})$ with probability $1/2$ (this is a sample from $E^b$ since $v_{i_0}$ is a uniform node and $(v_{i_0},v_{i_0+1})$ is a uniform non-self-loop edge).
\end{proof}

If we sample a walk $x^{(0)}y^{(0)},\ldots,x^{(2L)}y^{(2L)}$ in $\calG^z$ as in \autoref{clm:walk}, then by property (ii) and a union bound, with probability $\ge 1-2L\cdot o(1)=1-o(1)$, each of the edges on the walk is great, in which case by the triangle inequality, $\Delta\bigl(R|x^{(0)}y^{(0)},R|x^{(2L)}y^{(2L)}\bigr)\le 2L\epsilon$. In summary, by property (i), a $1-o(1)$ fraction of pairs of inputs in $(g^m)^{-1}(z)$ are \emph{good} in the sense that their conditional distributions of $R$ are within statistical distance $2L\epsilon=o(1)$. Thus a $1-o(1)$ fraction of inputs $xy\in(g^m)^{-1}(z)$ are such that $(xy,\overline{xy})$ is good for a $1-o(1)$ fraction of $\overline{xy}\in(g^m)^{-1}(z)$, in which case (letting $\overline{xy}$ be random in $(g^m)^{-1}(z)$ in the following){\allowdisplaybreaks
\begin{align*}
\Delta(R|xy,R)~&\textstyle=~\Delta\bigl(R|xy,\E_{\overline{xy}}R|\overline{xy}\bigr)\\
&\textstyle\le~\E_{\overline{xy}}\,\Delta\bigl(R|xy,R|\overline{xy}\bigr)\\
&\textstyle\le~\Pr_{\overline{xy}}[(xy,\overline{xy})\text{ is good}]\cdot o(1)+\Pr_{\overline{xy}}[(xy,\overline{xy})\text{ is not good}]\cdot 1\\
&\le~1\cdot o(1)+o(1)\cdot 1\\
&=~o(1)
\end{align*}
}where the second line is a basic general fact about statistical distance. Say $xy$ is \emph{typical} if $\Delta(R|xy,R)\le o(1)$ as above. Note that in the original probability space, $XY$ is marginally uniform over $(g^m)^{-1}(z)$ and thus with probability at least $1-o(1)$ over sampling $w\sim W$ and $xy\sim XY\in w$, $xy$ is typical. It follows that for at least $1-o(1)$ fraction of $w$, at least $1-o(1)$ fraction of $xy\in w$ are typical, in which case{
\begin{align*}
\Delta(R|w,R)~&\textstyle=~\Delta\bigl(\E_{xy\in w}R|xy,R\bigr)\\
&\textstyle\le~\E_{xy\in w}\,\Delta(R|xy,R)\\
&\textstyle\le~\Pr_{xy\in w}[xy\text{ is typical}]\cdot o(1)+\Pr_{xy\in w}[xy\text{ is not typical}]\cdot 1\\
&\le~1\cdot o(1)+o(1)\cdot 1\\
&=~o(1).
\end{align*}

}\section{Query Lower Bound} \label{sec:query-lb}

An alternative approach for proving a lower bound for the \witness-game for $\TSE_G\circ g^n$ is:
\begin{itemize}[leftmargin=1.75cm]
\item[\emph{Step~1}:] Prove an appropriate \emph{query complexity} lower bound for $\TSE_G$.
\item[\emph{Step~2}:] Use a query-to-communication simulation theorem like~\cite{chan13approximate,goos15rectangles,lee15lower}.
\end{itemize}
In this section, we carry out the first step by proving an optimal $\Omega(n)$ lower bound (which in particular answers a question from~\cite{lovasz95search})---this proof is a lot simpler than our proof for the~$\Omega(n/\log n)$ communication lower bound in \autoref{sec:communication-lb}. Unfortunately, as we discuss below, it is not known how to perform the second step for constant-size gadgets $g$.

The result of this section can be interpreted as evidence that the right bound in \autoref{thm:main} is~$2^{\Omega(n)}$ and the right bound in \autoref{cor:expander} is $\Omega(n)$, and also as motivation for further work to improve parameters for simulation theorems.

\subsection{Query-to-communication}
The query complexity analogue of nonnegative rank decompositions (nonnegative combinations of nonnegative rank-1 matrices) are \emph{conical juntas}: nonnegative combinations of conjunctions of literals (input bits or their negations). We write a conical junta as $h=\sum_C w_C C$ where $w_C\geq 0$ and~$C$ ranges over all conjunctions $C\colon\{0,1\}^n\to\{0,1\}$. The \emph{degree} of $h$ is the maximum number of literals in a conjunction~$C$ with $w_C>0$. Each conical junta naturally computes a nonnegative function $h\colon \{0,1\}^n\to\R_{\geq 0}$. Hence we may study \witness-games in query complexity. In particular, the query complexity of the \witness-game for $\TSE_G$ is the least degree of a conical junta $h$ that on input~$z$ outputs~$h(z)=|\!\viol(z)|-1$.

The main result of~\cite{goos15rectangles} is a simulation of randomized protocols (or nonnegative rank decompositions) by conical juntas: a cost-$d$ protocol for a lifted problem $F\circ g^n$ can be simulated by a degree-$O(d)$ conical junta (approximately) computing $F$. While $F$ here is arbitrary, the result unfortunately assumes that $g\coloneqq\IP_b$ is a logarithmic-size, $b\coloneqq\Theta(\log n)$, inner-product function $\IP_b\colon\{0,1\}^b\times\{0,1\}^b\to\{0,1\}$ given by $\IP_b(x,y)\coloneqq\langle x,y\rangle\bmod{2}$.

Plugging $b$-bit gadgets into the reductions of \autoref{sec:reductions} would blow up the number of input bits of CSP-SAT exponentially in $b$. This is not only an artifact of our particular reduction! Consider more generally any reduction from a communication search problem $S\circ g^n$ to a $\KWp$-game for a monotone $f\colon\{0,1\}^m\to\{0,1\}$. Since the $\KWp$-game has \emph{nondeterministic} communication complexity $\log m$ (number of bits the players must nondeterministically guess to find a witness), the reduction would imply $c\leq \log m$ where $c$ is the nondeterministic communication complexity of $S\circ g^n$. If merely computing $g$ requires $b$ bits of nondeterministic communication, then clearly $c\geq b$ so that $m\geq 2^b$.

\subsection{A linear lower bound}

\begin{theorem} \label{thm:query-lb}
There is a family of $n$-node bounded-degree graphs $G$ such that the \witness-game for $\TSE_G$ requires query complexity $\Omega(n)$.
\end{theorem}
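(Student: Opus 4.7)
The plan is to prove the $\Omega(n)$ lower bound on the conical junta degree of $|\viol(z)|-1$, which by the definition preceding the theorem equals the query complexity of the \witness-game for $\TSE_G$. The argument formalizes the ``quadratic vs.\ linear'' growth intuition from \autoref{sec:communication-lb} directly in the query setting, where it should be much simpler than its communication analogue because no information-theoretic machinery is needed.

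First I would set up the contradiction: assume $h=\sum_C w_C C$ is a conical junta of degree $d=o(n)$ with $h(z)=|\viol(z)|-1$ for all $z$. The structural hinge of the proof is that $h(z_1)=0$ on every 1-violation input $z_1$, which forces every conjunction $C$ with $w_C>0$ to be inconsistent with every 1-violation input. Leveraging this, I would fix an arbitrary 1-violation input $z_1$ with $\viol(z_1)=\{v_0\}$, pair up the remaining $2k$ terminals, and let $P_1,\ldots,P_k$ denote the $k$ edge-disjoint canonical paths provided by $k$-routability with $k=\Theta(n)$. For each $i$, the flipped input $z_3^{(i)}\coloneqq z_1^{P_i}$ is a 3-violation input with $h(z_3^{(i)})=2$. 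For any $C$ with $w_C>0$, the consistency $C(z_3^{(i)})=1$ requires $P_i|_{T_C}$ to equal a specific nonzero pattern $u_C$ depending on $c$ and $z_1|_{T_C}$ (nonzero precisely because $C(z_1)=0$).

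Next I would formalize the quadratic growth by considering $t$-subsets $S\subseteq[k]$ of paths: the derived input $z^S\coloneqq z_1^{\bigoplus_{i\in S}P_i}$ has $|\viol(z^S)|=1+2|S|$ and $h(z^S)=2|S|$. Consistency $C(z^S)=1$ requires $\bigoplus_{i\in S}P_i|_{T_C}=u_C$; since the $P_i$'s are edge-disjoint, the vectors $\{P_i|_{T_C}\}_{i\in[k]}$ are characteristic vectors of pairwise disjoint subsets of $T_C$, so a linear-algebraic counting bounds the number of valid $S$'s in terms of $\rk\{P_i|_{T_C}\}_{i\in[k]}\le d$. Matching this count against $\sum_S h(z^S)=\sum_{t=0}^k 2t\binom{k}{t}=k\cdot 2^k$ via an appropriate moment identity should then yield $W=\sum_C w_C$ bounds strong enough (in conjunction with the uniform-distribution upper bound $\sum_C w_C\cdot 2^{-|T_C|}=\E_z[h(z)]=\Theta(n)$) to force $d=\Omega(k)=\Omega(n)$.

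The main obstacle will be executing this matching cleanly so that the $d$-dependence comes out strong enough to rule out all $d=o(n)$. The simple ``first moment'' version (from $t=1$) already shows $W\geq 2k$, but this is too weak on its own; the refinement must carefully combine the low-$|S|$ and high-$|S|$ counts, accounting both for conjunctions whose path-intersection vectors form linearly dependent subsets of $T_C$ and for the fact that the ``uniform'' upper bound on $W$ carries an exponential factor $2^d$. Unlike in the communication proof---where the information-theoretic framework forces a $\log n$ loss---the direct combinatorial counting here should avoid such overhead and deliver the optimal $\Omega(n)$ bound.
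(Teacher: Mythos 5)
There is a genuine gap, and it sits exactly where you flag uncertainty: the moment identities over the single orbit $\{z^S\}_{S\subseteq[k]}$ cannot by themselves force $d=\Omega(n)$. Work out what they say. For each surviving conjunction $C$, edge-disjointness gives a set $A\subseteq[k]$ of paths meeting $T_C$ (with $|A|\le d$) and a unique target $A_C\subseteq A$ with $A_C\ne\emptyset$ such that $C(z^S)=1$ iff $S\cap A=A_C$. Writing $a_C=|A|$, $b_C=|A_C|$ and letting $t=\rho k$, the constraint becomes $\sum_C w_C\,\rho^{b_C}(1-\rho)^{a_C-b_C}\approx 2\rho k$ for all $\rho$. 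This system has perfectly good low-degree nonnegative solutions --- e.g.\ total weight $2k$ spread over conjunctions with $a_C=b_C=1$, each reading a single edge of a single path and demanding it be flipped relative to $z_1$. So no contradiction can be extracted: such weight assignments are only impossible because a conjunction reading $o(n)$ edges that accepts some input must also accept some \emph{other} $1$-violation input (one can reroute the unseen violations through the connected graph on the unread edges), and that fact is invisible to your counting, which only uses $C(z_1)=0$ for the one fixed $z_1$. The missing ingredient is precisely the paper's key structural claim: after a preprocessing step ensuring the unread part of the graph is connected (which uses expansion and costs only a constant factor in degree), every surviving conjunction must read \emph{all} edges incident to at least two fixed nodes and witness violations there. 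A secondary issue: you invoke $k$-routability with $k=\Theta(n)$, which bounded-degree graphs do not admit (random terminal pairs are at distance $\Omega(\log n)$, so edge-disjointness caps $k$ at $O(n/\log n)$; this is exactly why the communication bound loses a $\log n$). Your construction only needs one fixed system of edge-disjoint paths, so a near-perfect matching of $G$ would do, but as written the parameter is unavailable.

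For comparison, the paper's proof is quite different and shorter. After the two WLOG claims (connectivity of the unread part; each conjunction witnesses two violations at fixed nodes $v_1,v_2$), it considers the distributions $\mu_i$ over inputs with $i$ violations at a uniformly random $i$-set of nodes. Conditioning on $\{v_1,v_2\}$ being among the violations contributes a factor $\binom{n-2}{i-2}/\binom{n}{i}$, which grows quadratically in $i$; comparing $i=3$ with $i=5$ gives $\mu_5(C)\ge(10/3-o(1))\,\mu_3(C)$ term by term (the conditional acceptance probabilities match up to $1-o(1)$ by rerouting the extra violations through the unread, connected part). Summing against $w_C$ forces $\E_{\mu_5}[h]\ge(10/3-o(1))\cdot\E_{\mu_3}[h]=20/3-o(1)$, contradicting $\E_{\mu_5}[h]=4=2\cdot\E_{\mu_3}[h]$. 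If you want to salvage your path-flipping framework, you would need to inject the two-witness claim (or randomize over the placement of the violations) so that each conjunction's acceptance probability provably scales at least quadratically with the number of violations; at that point you have essentially rederived the paper's argument.
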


\paragraph{Relation to \cite{lovasz95search}.}
An analogue of the \eqref{eq:kw-ef} connection holds for query complexity: if there is a deterministic decision tree of height $d$ that solves the search problem $\TSE_G$, we can convert this into a degree-$(d+O(1))$ conical junta for the associated \witness-game. Moreover, if we only have a \emph{randomized} $\epsilon$-error decision tree for the search problem, then the connection gives us a conical junta $h$ that \emph{approximately} solves the \witness-game: $h(z) \in (|\!\viol(z)|-1)\cdot(1\pm\epsilon)$ for all $z$.

Our proof below is robust enough that the $\Omega(n)$ bound holds even for conical juntas that merely approximately solve the \witness-game. Hence we get a randomized $\Omega(n)$ lower bound for $\TSE_G$, which was conjectured by~\cite[p.~125]{lovasz95search}; note however that the earlier work~\cite{goos14communication} already got a near-optimal $\Omega(n/\log n)$ bound. In any case, to our knowledge, this is the first $O(1)$-vs-$\Omega(n)$ separation between certificate complexity and randomized query complexity for search problems.

\paragraph{The proof.}
Fix an $n$-node bounded-degree expander $G=(V,E)$. That is, for any subset $U\subseteq V$ of size $|U|\leq n/2$, the number of edges leaving $U$ is $\Theta(|U|)$. We tacitly equip $G$ with an arbitrary odd-weight node-labeling. Assume for the sake of contradiction that there is a conical junta $h=\sum w_C C$ of degree $o(n)$ for the \witness-game for $\TSE_G$. Let $C$ be a conjunction with $w_C>0$. Denote by $S\subseteq E$ the set of edges that $C$ reads; hence $|S| \leq o(n)$. Below, we write $G\smallsetminus S$ for the graph induced on the edges $E\smallsetminus S$ (deleting nodes that become isolated).

\begin{claim} \label{clm:connected}
We may assume w.l.o.g.\ that $G\smallsetminus S$ is connected.
\end{claim}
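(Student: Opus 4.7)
The plan is to exploit the expansion of $G$ to show that removing any $o(n)$-size edge set $S$ leaves a ``giant'' component containing all but $o(n)$ vertices, and then to \emph{refine} the conical junta so that the small leftover components get absorbed into the edges read by each conjunction.

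First I would establish the structural fact: since $G$ is a bounded-degree expander with edge expansion constant $c > 0$, and $|S| = o(n)$, the graph $G \smallsetminus S$ has a unique component $K$ with $|K| > n/2$, and the remaining vertex set $U \coloneqq V \smallsetminus K$ satisfies $|U| \le |S|/c = o(n)$. Indeed, any connected component $K'$ of $G \smallsetminus S$ with $|K'| \le n/2$ has all its $G$-edges to the outside contained in $S$, so by expansion $c|K'| \le |E_G(K', V\smallsetminus K')| \le |S|$; and there is at most one component of size $> n/2$. A simple counting argument (summing the expansion bound over all components, and using $|S| = o(n)$) rules out the case where no giant exists.

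Next, given a conjunction $C$ in the junta with edge set $S = S_C$, let $U$ be the (at most $o(n)$) vertices outside the giant component of $G \smallsetminus S$, and let $S'$ be $S$ together with all $G$-edges incident to $U$. Since $G$ has bounded degree $d$, $|S' \smallsetminus S| \le d|U| = o(n)$, so $|S'| = o(n)$. I would then rewrite
\[
w_C \cdot C \;=\; \sum_{b \in \{0,1\}^{S' \smallsetminus S}} w_C \cdot (C \wedge C_b),
\]
where $C_b$ is the conjunction forcing the edges in $S' \smallsetminus S$ to take values $b$; this is a valid refinement of the conical-junta representation since for any input $z$ exactly one $C_b$ fires. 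Each $C \wedge C_b$ has degree $|S'| = o(n)$, and its read-set $S'$ has the property that $G \smallsetminus S'$ consists of $K$ together with isolated vertices (the elements of $U$), so after deleting isolated nodes it is the connected graph $K$.

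Applying this refinement to every conjunction in $h$ yields an equivalent conical junta, still of degree $o(n)$, in which every conjunction satisfies the connectedness property; this is exactly what ``w.l.o.g.'' means here. The only mild issue is bookkeeping: the refinement blows up the number of conjunctions by a factor of $2^{o(n)}$, but since the rest of the argument in Section~\ref{sec:query-lb} only uses the \emph{degree} of the junta and the per-conjunction expansion structure, this blow-up is harmless. The main conceptual content is the expander-based bound $|U| = o(n)$; once that is in hand, the refinement step is essentially formal.
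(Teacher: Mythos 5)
Your proposal is correct and follows essentially the same route as the paper: use expansion (plus bounded degree) to show the small components of $G\smallsetminus S$ have total size $O(|S|)=o(n)$, enlarge $S$ to $S'$ by absorbing their edges, and split $C$ into a sum of conjunctions over the possible values on $S'\smallsetminus S$, which preserves the conical-junta value and only changes the degree by a constant factor. (The only nitpick is that summing the expansion bound over components gives $|U|\le 2|S|/c$ rather than $|S|/c$, since a boundary edge can be counted for two components, but this is immaterial.)
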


\begin{proof}
If $G\smallsetminus S$ is not connected, we may replace $C$ with a conjunction (actually, a sum of them) that reads more input variables; namely, we let $C$ read a larger set of edges $S'\supseteq S$ including all edges from connected components of $G\smallsetminus S$ of ``small'' size $\leq n/2$. When adding some small component $K\subseteq E$ to $S'$ we note that, because $G$ is expanding, the size of $K$ is big-$O$ of the size of the edge boundary of $K$ (which is contained in $S$). On the other hand, every edge in $S$ lies on the boundary of at most two components. It follows that $|S'| = O(|S|)$, i.e., we increased the degree of $h$ only by a constant factor. Now in $G\smallsetminus S'$ we have only components of size $> n/2$, but there can only be one such component.
\end{proof}

\begin{claim} \label{clm:twowitness}
We may assume w.l.o.g.\ that $C$ witnesses at least two fixed nodes with a parity violation (i.e., $C$ reads all the edge labels incident to the two nodes).
\end{claim}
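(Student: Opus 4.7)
The plan is to show directly that any conjunction $C$ with $w_C>0$ must certify parity violations at at least two fully-fixed nodes. Let $V_f$ denote the set of nodes whose every incident edge lies in $S$, and let $V_v(C)\subseteq V_f$ denote those $v\in V_f$ at which $C$'s assignment forces $z(v)\ne\ell(v)$. I will argue that $\kappa\coloneqq|V_v(C)|\le 1$ is impossible: in that case I build a $1$-violation input $z$ consistent with $C$, and since the witness function $|\viol(z)|-1$ equals $0$ on any $1$-violation input (even under the multiplicative-error relaxation mentioned above), this forces $h(z)\ge w_C>0$, contradicting that $h$ computes/approximates the \witness-game.

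To build such a $z$, set $z|_S$ to the assignment enforced by $C$. It remains to choose $z|_{E\smallsetminus S}$ so that $\viol(z)$ consists of a single prescribed node. Every edge in $E\smallsetminus S$ has both endpoints outside $V_f$, so choosing $z|_{E\smallsetminus S}$ amounts to labeling the edges of the subgraph $G\smallsetminus S$ subject to one parity constraint per node $v\in V\smallsetminus V_f$. By \autoref{clm:connected}, $G\smallsetminus S$ is connected, so this $\Z_2$-linear system is solvable if and only if the sum of the required parities is $0\pmod 2$.

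The parity check combines three facts: (i) $\sum_{v\in V}z(v)=0$ (every edge is counted twice), (ii) $\sum_{v\in V}\ell(v)=1$ (odd-weight hypothesis), and (iii) $\sum_{v\notin V_f}c(v)\equiv\sum_{v\in V_f}z(v)\pmod 2$, where $c(v)\coloneqq\sum_{e\ni v,\,e\in S}z(e)$ is the fixed-edge contribution at $v$; both sides of (iii) equal, modulo $2$, the number of $S$-edges crossing the cut $(V_f,V\smallsetminus V_f)$. Writing $\kappa=|V_v(C)|$ and using $\sum_{v\in V_f}z(v)\equiv\sum_{v\in V_f}\ell(v)+\kappa\pmod 2$, a one-line calculation gives that the sum over $v\notin V_f$ of the ``no violation at $v$'' requirement is $\equiv 1-\kappa\pmod 2$. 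Hence if $\kappa=1$ the system is directly solvable and yields a $z$ with $\viol(z)=V_v(C)$; while if $\kappa=0$, flipping the requirement at any chosen non-fixed node $v^*$ shifts the sum by $1$, making the system solvable and producing a $z$ whose sole violation sits at $v^*$. Either way, the forbidden $1$-violation input exists, completing the contradiction.

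The only delicate step is the parity bookkeeping in the third paragraph: one must carefully track which edges of $S$ cross the cut $(V_f,V\smallsetminus V_f)$ and invoke the odd-weight hypothesis $\sum_v\ell(v)=1$ at the right moment. No structural property of $G$ beyond the connectivity of $G\smallsetminus S$ (already secured by \autoref{clm:connected}) enters the argument.
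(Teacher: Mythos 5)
Your proof is correct, but it takes a different route from the paper's. The paper argues by perturbation: take any input $z$ accepted by $C$, observe that every non-witnessed violation sits at a node of the connected graph $G\smallsetminus S$ (\autoref{clm:connected}), and cancel those violations in pairs by flipping paths inside $G\smallsetminus S$ (the mechanism from \autoref{sec:tseitin-def}), leaving a $1$-violation input still accepted by $C$. You instead construct the fooling input from scratch, reducing the existence of an extension of $C$'s partial assignment with exactly one violation to the solvability of a $\Z_2$-linear system on $G\smallsetminus S$, and verifying the requisite global parity condition $\sum_v b_v=0$ by explicit bookkeeping against the odd-weight hypothesis. Both arguments rest on the same two pillars (connectivity of $G\smallsetminus S$ and the mod-$2$ handshake identity), and your solvability criterion for connected graphs is really the path-flipping fact in linear-algebraic clothing; but your version is self-contained and makes the case split $\kappa=0$ vs.\ $\kappa=1$ fully explicit, at the cost of heavier parity accounting, whereas the paper's is shorter because it reuses machinery already set up. Your parity computation checks out ($\sum_{v\notin V_f}b_v\equiv 1+\kappa$, so the system is solvable as posed iff $\kappa$ is odd, and one requirement-flip fixes the $\kappa=0$ case); the only blemish is the parenthetical in your fact (iii): both sides equal the $\Z_2$-sum of the \emph{labels} $z(e)$ on $S$-edges crossing the cut $(V_f,V\smallsetminus V_f)$, not the \emph{number} of such edges. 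This does not affect the argument since the identity you actually use is correct.
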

\begin{proof}
Suppose for contradiction that $C$ witnesses at most one violation. Then we may fool $C$ into accepting an input (and hence $h$ into outputting a positive value on that input) where the number of violations is $1$, which is a contradiction to the definition of the \witness-game. Indeed, let $z$ be some input accepted by $C$. Then we may modify $z$ freely on the connected graph $G\smallsetminus S$ (by \autoref{clm:connected}) without affecting $C$'s acceptance: we may eliminate pairs of violations from $z$ by flipping paths (as in \autoref{sec:tseitin-def}) until only one remains. (This is possible since by definition, all the non-witnessed violations of $z$ remain in $G\smallsetminus S$.)
\end{proof}

Let $\mu_i$ ($i$ odd) denote the distribution on inputs that have $i$ violations at a random set of $i$ nodes, and are otherwise random with this property. We may generate an input from $\mu_i$ as follows:
\begin{enumerate}
\item Choose an $i$-set $T_i\subseteq V$ of nodes at random.
\item Let $z\in\Z^E_2$ be any fixed input with $\viol(z)=T_i$.
\item Let $q\in\Z^E_2$ be a random eulerian graph.
\item Output $z+q$.
\end{enumerate}

\autoref{thm:query-lb} follows from the following lemma. Here we identify $C$ with the set (subcube) of inputs it accepts.

\begin{lemma} \label{lem:factor}
$\mu_5(C)\ge(10/3-o(1))\cdot\mu_3(C)$.
\end{lemma}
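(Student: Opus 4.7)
The plan is to compute $\mu_3(C)$ and $\mu_5(C)$ exactly by counting, and then show that the ratio is at least $10/3-o(1)$. For the overall proof of \autoref{thm:query-lb}, summing this over $C$ (with nonnegative weights $w_C$) gives $4=\E_{\mu_5}[h]\ge(10/3-o(1))\cdot\E_{\mu_3}[h]=(20/3)(1-o(1))$, a contradiction since $20/3>4$.

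\emph{Setup.} Let $W\subseteq V$ denote the set of nodes all of whose incident edges lie in $S$ (so $C$ completely pins down the parities at these nodes), and let $W'\subseteq W$ be those on which the forced assignment $\alpha\in\Z_2^S$ of $C$ actually creates a parity violation. Write $w\coloneqq|W|$, $k\coloneqq|W'|$, $s\coloneqq|S|$. By \autoref{clm:twowitness}, $k\ge 2$; since $G$ has bounded degree and $s=o(n)$, also $w=O(s)=o(n)$; and by \autoref{clm:connected}, $G\smallsetminus S$ (viewed as a graph on $V\smallsetminus W$) is connected.

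\emph{Counting.} For any fixed odd-size $T\subseteq V$, the number of $z\in\Z_2^E$ with $\viol(z)=T$ equals $2^{|E|-n+1}$ (the size of the cycle space of the connected graph $G$), so $|\{z:|\viol(z)|=i\}|=\binom{n}{i}\cdot 2^{|E|-n+1}$. For the numerator $|\{z:z|_S=\alpha,\,|\viol(z)|=i\}|$, note that $z|_S=\alpha$ already forces $\viol(z)\cap W=W'$, so the remaining $i-k$ violations must lie in $V\smallsetminus W$. Specifying them as a set $T'\subseteq V\smallsetminus W$ of size $i-k$ yields a Tseitin-style counting problem on the connected graph $G\smallsetminus S$ with effective node labels $\ell'(v)\coloneqq\ell(v)+\sum_{e\ni v,\,e\in S}\alpha(e)$. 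A short ``each edge has two endpoints'' calculation gives $\sum_{v\in V\smallsetminus W}\ell'(v)\equiv 1+k\pmod 2$ (using $\sum_v\ell(v)=1$), so the parity compatibility $i-k\equiv 1+k\pmod 2$ reduces to $i$ being odd, which holds for $i\in\{3,5\}$. Each valid $T'$ is then realized by exactly $2^{|E|-s-(n-w)+1}$ completions, giving
\[
\mu_i(C)~=~\frac{\binom{n-w}{i-k}}{\binom{n}{i}}\cdot 2^{w-s}.
\]

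\emph{Conclusion.} The power of $2$ cancels in the ratio. For the principal case $k=2$,
\[
\frac{\mu_5(C)}{\mu_3(C)}~=~\frac{\binom{n-w}{3}\binom{n}{3}}{\binom{n-w}{1}\binom{n}{5}}~=~\frac{10}{3}\cdot\frac{(n-w-1)(n-w-2)}{(n-3)(n-4)}~=~\tfrac{10}{3}(1-o(1)),
\]
since $w=o(n)$. For $k=3$ the same computation gives a ratio $\sim 10\gg 10/3$; for $k\in\{4,5\}$ one has $\mu_3(C)=0$ (the lemma is vacuous); and for $k\ge 6$ both quantities vanish. The one delicate ingredient in the whole argument is the parity bookkeeping producing the $1+k$ in $\sum\ell'(v)$ --- this is exactly where the Tseitin oddness hypothesis $\sum_v\ell(v)=1$ enters, and it guarantees that $\mu_3(C)$ and $\mu_5(C)$ have the same ``extension-degrees-of-freedom'' factor $2^{w-s}$ rather than one being accidentally forced to zero by parity.
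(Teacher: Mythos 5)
Your proof is correct, and it takes a genuinely different route from the paper's. The paper first conditions on the event $T_i\supseteq\{v_1,v_2\}$ for the two witnessed violations of \autoref{clm:twowitness}, which produces exactly the binomial ratio $\bigl(\binom{n-2}{3}/\binom{n}{5}\bigr)\big/\bigl(\binom{n-2}{1}/\binom{n}{3}\bigr)=10/3$, and then shows $\Pr[C(y_5)=1]\ge(1-o(1))\cdot\Pr[C(y_3)=1]$ by a coupling: $y_5=y_3+p$ where $p$ is a path joining the two extra violated nodes inside the connected graph $G\smallsetminus S$, which exists with probability $1-o(1)$ and is invisible to $C$. You instead compute $\mu_i(C)$ in closed form by exploiting the coset structure of Tseitin instances: each violation pattern corresponds to a coset of the cycle space (of $G$ for the denominator, of $G\smallsetminus S$ for the numerator), and your parity bookkeeping $\sum_{v\in V\smallsetminus W}\ell'(v)\equiv 1+k$ correctly certifies that the relevant cosets are nonempty for both $i=3$ and $i=5$, so the factor $2^{w-s}$ cancels in the ratio. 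Both arguments lean on the same two ingredients---\autoref{clm:connected} (you need it for the cycle-space dimension of $G\smallsetminus S$; the paper needs it for the existence of the path $p$) and \autoref{clm:twowitness} (your $k\ge 2$). What your version buys is an exact formula valid for every $k$, showing the ratio is even larger (about $10$) when $k=3$ and that the inequality is vacuous for $k\ge 4$; what the paper's coupling buys is robustness of style---it avoids any exact enumeration and is the form of argument that transfers to the communication-complexity setting of \autoref{sec:communication-lb}.
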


Indeed, consider the expected output value $\E_{z_i\sim\mu_i}[h(z_i)]$. This should be $2$ for $i=3$, and $4$ for $i=5$, i.e., a factor $2$ increase. However, the above lemma  implies that the output value gets multiplied by more than a factor $3$, which is the final contradiction.

\begin{proof}[Proof of \autoref{lem:factor}]
By \autoref{clm:twowitness} let $\{v_1,v_2\}$ be a pair of nodes where~$C$ witnesses two violations. For $i=3,5$, let $z_i\sim\mu_i$ and denote by $T_i$ the $i$-set of its violations. Then
\begin{align*}
\hspace{2cm}
\mu_3(C)~
&=~\Pr[C(z_3)=1] \\
&=~\Pr[C(z_3)=1\text{ and }T_3\supseteq \{v_1,v_2\} ]\\
&=~\textstyle \binom{n-2}{1}/\binom{n}{3}\cdot\Pr[C(y_3)=1], \tag{for $y_3\coloneqq (z_3\mid T_3\supseteq\{v_1,v_2\})$}\\[5mm]
\mu_5(C)~
&=~\Pr[C(z_5)=1]\\
&=~\Pr[C(z_5)=1\text{ and }T_5\supseteq\{v_1,v_2\}]\\
&=~\textstyle \binom{n-2}{3}/\binom{n}{5} \cdot\Pr[C(y_5)=1]. \tag{for $y_5\coloneqq (z_5\mid T_5\supseteq\{v_1,v_2\})$}
\end{align*}
So their ratio is
\[
\frac{\mu_5(C)}{\mu_3(C)}
~=~\frac{10}{3}
\cdot\frac{\Pr[C(y_5)=1]}{\Pr[C(y_3)=1]}.
\]
Hence the following claim concludes the proof of \autoref{lem:factor}.
\end{proof}

\begin{claim}
$\Pr[C(y_5)=1]/\Pr[C(y_3)=1]\ge 1-o(1)$.
\end{claim}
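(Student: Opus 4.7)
My plan is to build a coupling between $y_3$ and $y_5$ under which, with probability $1-o(1)$, they differ only on edges outside $S$; since $C$ reads only edges in $S$, on that event $C(y_3)=C(y_5)$, and the desired multiplicative ratio will follow.

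Concretely, I generate $y_3$ in two stages: pick the third violated node $u_1$ uniformly in $V\smallsetminus\{v_1,v_2\}$, then sample a uniform $z$ with $\viol(z)=\{v_1,v_2,u_1\}$. Independently, pick an unordered pair $\{u_2,u_3\}\subseteq V\smallsetminus\{v_1,v_2,u_1\}$ uniformly, choose any $u_2$--$u_3$ path $p$ in $G$, and set $y_5:=y_3+p$. Flipping $p$ toggles parity only at its endpoints, so $\viol(y_5)=\{v_1,v_2,u_1,u_2,u_3\}$; conditional on this $5$-set the map $y_3\mapsto y_3+p$ is a measure-preserving bijection onto $\{z:\viol(z)=\viol(y_5)\}$. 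A short counting check shows that the $(u_1\text{ vs.\ }\{u_2,u_3\})$ asymmetry cancels, so the marginal of $\viol(y_5)$ is uniform over $5$-subsets containing $\{v_1,v_2\}$, i.e., $y_5$ really has the prescribed distribution.

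The key geometric step is that $p$ can be chosen inside $E\smallsetminus S$. By \autoref{clm:connected}, $G\smallsetminus S$ is connected, so any two nodes of $V(G\smallsetminus S)$ are joined by a path there; for such a $p$ we have $p|_S=0$, hence $y_5|_S=y_3|_S$ and $C(y_5)=C(y_3)$. The only obstruction is when some $u_j$ lies in $V\smallsetminus V(G\smallsetminus S)$, the set of nodes all of whose incident edges are in $S$; by bounded degree this set has size $O(|S|)=o(n)$, so for every fixed $u_1$ the ``bad'' event has probability $o(1)$ over the choice of $\{u_2,u_3\}$.

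The main (minor) obstacle is turning this into a \emph{multiplicative} rather than additive bound. The point is that the bad event depends only on $(u_1,u_2,u_3)$ and not on the internal edge labelling of $y_3$, so $\Pr[\text{bad}\mid y_3]=o(1)$ for every $y_3$; thus $\Pr[C(y_3)=1\text{ and bad}]\le o(1)\cdot\Pr[C(y_3)=1]$, and
\[
\Pr[C(y_5)=1]\ \ge\ \Pr[C(y_3)=1\text{ and good}]\ \ge\ (1-o(1))\cdot\Pr[C(y_3)=1],
\]
as required.
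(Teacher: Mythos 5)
Your proposal is correct and is essentially the paper's own argument: both couple $y_3$ and $y_5$ by flipping a path $p$ between two freshly chosen violation nodes, route $p$ inside the connected graph $G\smallsetminus S$ on a $1-o(1)$-probability ``good'' event so that $C$ cannot distinguish the two inputs, and convert this into a multiplicative bound by noting the good event is (conditionally) independent of the edge labelling. Your explicit verification that the marginal of $\viol(y_5)$ is uniform over $5$-sets containing $\{v_1,v_2\}$ is a detail the paper leaves implicit, but otherwise the two proofs coincide.
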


\begin{proof}
We can generate $y_3$ and $y_5$ jointly as follows:
\begin{itemize}[itemsep=3mm]
\item[\bf\boldmath $y_3$:] Choose $v_3\in V\smallsetminus\{v_1,v_2\}$ uniformly random and let $x_3$ be some input with $\viol(x_3)=\{v_1,v_2,v_3\}$. Output $y_3\coloneqq x_3+q$ where $q$ is a random eulerian graph.
\item[\bf\boldmath $y_5$:] Continuing from the above, choose $\{v_4,v_5\}\subseteq V\smallsetminus\{v_1,v_2,v_3\}$ at random. If possible, let $p$ be a path in $G\smallsetminus S$ joining $\{v_4,v_5\}$ (a ``good'' event), otherwise let $p$ be any path joining $\{v_4,v_5\}$. Output $y_5\coloneqq x_3+p+q$.
\end{itemize}
It suffices to prove the claim conditioned on any particular $v_3$ (and hence also on $x_3$). By \autoref{clm:connected} we have $\Pr[\text{``good''}\mid v_3]=\Pr\bigl[v_4,v_5\in G\smallsetminus S\bigmid v_3\bigr]\ge 1-o(1)$ since $|S|\leq o(n)$. If the ``good'' event occurs, then $C$ cannot distinguish between $y_3=x_3+q$ and $y_5=x_3+p+q$ so that $\Pr[C(y_3)=1\mid v_3]=\Pr\bigl[C(y_5)=1\bigmid\text{``good''},v_3\bigr]$. The claim follows as{\allowdisplaybreaks
\begin{align*}
\Pr[C(y_5)=1\mid v_3]~&\ge~\Pr\bigl[C(y_5)=1\text{ and ``good''}\bigmid v_3\bigr]\\
&=~\Pr\bigl[C(y_5)=1\bigmid\text{``good''},v_3\bigr]\cdot\Pr[\text{``good''}\mid v_3]\\
&=~\Pr[C(y_3)=1\mid v_3]\cdot\Pr[\text{``good''}\mid v_3]\\
&\ge~\Pr[C(y_3)=1\mid v_3]\cdot(1-o(1)).\qedhere
\end{align*}
}\end{proof}

\smallskip
\subsection*{Acknowledgements}

Thanks to Denis Pankratov, Toniann Pitassi, and Robert Robere for discussions. We also thank Samuel Fiorini and Raghu Meka for e-mail correspondence. M.G.\ admits to having a wonderful time at IBM while learning about extended formulations with T.S.\ Jayram and Jan Vondrak.

Part of this research was done while M.G.\ and R.J.\ were attending the \emph{Semidefinite and Matrix Methods for Optimization and Communication} program at the Institute for Mathematical Sciences, National University of Singapore in 2016.
This research was supported in part by NSERC, and in part by the Singapore Ministry of
Education and the National Research Foundation, also through the Tier
3 Grant \emph{Random numbers from quantum processes} MOE2012-T3-1-009.
M.G.\ is partially supported by the Simons Award for Graduate Students in TCS.

\DeclareUrlCommand{\Doi}{\urlstyle{sf}}
\renewcommand{\path}[1]{\small\Doi{#1}}
\renewcommand{\url}[1]{\href{#1}{\small\Doi{#1}}}
\bibliographystyle{alphaurl}
\addcontentsline{toc}{section}{References}

\newcommand{\etalchar}[1]{$^{#1}$}

\end{document}